\numberwithin{equation}{section}
\newtheoremstyle{TheoremStyle}
{3pt}
{3pt}
{\slshape}
{}
{\bf}
{:}
{.5em}
{}
\definecolor{PaColor}{RGB}{255,50, 150}
\definecolor{ClaColor}{RGB}{0,0,255}
\theoremstyle{TheoremStyle}
\newtheorem{theorem}{Theorem}
\newtheorem{proposition}[theorem]{Proposition}
\newtheorem{lemma}[theorem]{Lemma}
\newtheorem{definition}[theorem]{Definition}
\newtheorem{remark}[theorem]{Remark}
\newtheorem{example}[theorem]{Example} 
\title{On a Microlocal Version of Young's Product Theorem}
\author{Claudio Dappiaggi \thanks{CD: Dipartimento di Fisica,
	Universit\`a degli Studi di Pavia \& INFN, Sezione di Pavia, 
	Via Bassi 6,
	I-27100 Pavia,
	Italia; Istituto Nazionale di Alta Matematica, Sezione di Pavia, via Ferrata 5, 27100 Pavia, Italia
	\mbox{claudio.dappiaggi@unipv.it}
}
\and
	Paolo Rinaldi \thanks{PR: Institute for Applied Mathematics, Universit\"at Bonn 
	Endenicher Allee 60,
	D-53115 Bonn,
	Germany;
	 Istituto Nazionale di Alta Matematica, Sezione di Pavia, via Ferrata 5, 27100 Pavia, Italia
	\mbox{rinaldi@iam.uni-bonn.de}
}
\and
	Federico Sclavi \thanks{FS: Dipartimento di Fisica,
	Universit\`a degli Studi di Pavia \& INFN, Sezione di Pavia, 
	Via Bassi 6,
	I-27100 Pavia,
	Italia; Istituto Nazionale di Alta Matematica, Sezione di Pavia, via Ferrata 5, 27100 Pavia, Italia
\mbox{federico.sclavi01@universitadipavia.it}}
}
\begin{document}
\maketitle
\begin{abstract}
A key result in distribution theory is Young's product theorem which states that the product between two H\"older distributions $u\in\mathcal{C}^\alpha(\mathbb{R}^d)$ and $v\in\mathcal{C}^\beta(\mathbb{R}^d)$ can be unambiguously defined if $\alpha+\beta>0$. 
We revisit the problem of multiplying two H\"older distributions from the viewpoint of microlocal analysis, using techniques proper of Sobolev wavefront set. 
This allows us to establish sufficient conditions which allow the multiplication of two H\"older distributions even when $\alpha+\beta\leq 0$.
\end{abstract}
\paragraph*{Keywords:} Young's product theorem, Sobolev wavefront set, H\"older spaces.

\paragraph*{MSC 2020:} 46F10

\section{Introduction}

In the framework of the theory of distributions, establishing under which conditions it is possible to define a multiplicative product is a topic which has attracted a lot of attention, especially due to its relevance for applications, see {\it e.g.} \cite{Ober}. 
In the past few years, in view of its relevance in the context of regularity structures and of Hairer's reconstruction theorem \cite{Hai14}, Young's product theorem, addressing the problem of multiplying two tempered distributions which can be read as elements of an H\"older space, has acquired particular relevance.
More precisely, it turns out that, if $u\in\mathcal{C}^\alpha(\mathbb{R}^d)$ and $v\in\mathcal{C}^\beta(\mathbb{R}^d)$ then there exists the product $uv\in\mathcal{C}^{\alpha\wedge\beta}(\mathbb{R}^d)$ with $\alpha\wedge\beta=\min(\alpha,\beta)$ provided that $\alpha+\beta>0$, see {\it e.g.} \cite{BCD11}.

At the same time, the problem of multiplying two distributions can also be addressed from a completely different perspective, namely within the framework of microlocal analysis \cite{Hor03}. 
While this setting has been proficiently used in many contexts, one might wonder whether it can be applied to problem of multiplying two H\"older functions. 
In this paper we address this issue following a specific strategy. 
To start with, we observe that working with the notion of smooth wavefront set as in \cite[Ch. 8]{Hor03} does not allow to catch the specific features of the singular structure of an element lying in a H\"older space. 
In order to bypass this hurdle we consider instead the Besov spaces $B^\alpha_{\infty,\infty}(\mathbb{R}^d)$, $\alpha\in\mathbb{R}$, see \cite{BCD11}, which coincide with $\mathcal{C}^\alpha(\mathbb{R}^d)$ whenever $\alpha\in\mathbb{R}\setminus\mathbb{N}$, while $\mathcal{C}^\alpha(\mathbb{R}^d)\subset B^\alpha_{\infty,\infty}(\mathbb{R}^d)$ if $\alpha\in\mathbb{N}$. 
Recently, a notion of wave-front set encoding Besov regularity has been introduced in \cite{DRS22}.

To start with we discuss the interplay between Besov spaces, and in particular H\"older spaces, and suitable fractional Sobolev spaces. 
This allows us to make use of the concept of Sobolev wave front set \cite{Hor97} which was introduced with the goal of catching a more specific singular behaviour of a distribution in the cotangent space in comparison to the standard smooth wave front set. 

Working in this framework we are able to prove our first version of Young's product theorem. 
In the following with $\mathrm{singsupp}(u)$ we refer to the to the singular support of a distribution $u\in\mathcal{D}^\prime(\mathbb{R}^d)$. In particular, we recall that the singular support of $u$ is the set of points $x\in\mathbb{R}^d$ having no neighbourhood to which the restriction of $u$ is a smooth function \cite[Chapt. 2]{Hor03}.
In addition, given $g \in \mathcal{C}_{\mathrm{loc}}^\beta(\mathbb{R}^d)$ with $\beta<0$ we define
\begin{equation}
\beta^\ast_g \vcentcolon= \sup\{\gamma \leq 0: g \in B_{p,\infty,\mathrm{loc}}^\gamma(\mathbb{R}^d) \text{ for all}\, p\in [2,\infty]\}\,.
\end{equation}
Finally, we call {\em canonical} the product between two distributions, which is defined along the same lines of \cite[Thm 8.2.10]{Hor03}, see also Remark \ref{Rmk: canonicity}.

\begin{theorem}\label{Thm: microlocal Young_intro}
Let $f,g\in\mathcal{D}^\prime(\mathbb{R}^d)$. Suppose that, for any $x\in \operatorname{singsupp}(f)\cap\operatorname{singsupp}(g)$, $\exists\,\varphi_f, \varphi_g \in\mathcal{D}(\mathbb{R}^d)$, with $\varphi_f(x)\neq0$ and $\varphi_g(x)\neq0$, such that $\varphi_ff\in \mathcal{C}_{c}^\alpha(\mathbb{R}^d)$ and  $\varphi_gg\in \mathcal{C}_{c}^\beta(\mathbb{R}^d)$ with $\alpha>0$ and $\beta < 0$ such that
\begin{align}\label{Eq: Novel Young condition intro}
\alpha + \beta^\ast_{\varphi_g g} > 0\,.
\end{align}
Then there exists a canonical product $f \cdot g \in \mathcal{D}^\prime(\mathbb{R}^d)$.
\end{theorem}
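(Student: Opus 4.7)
The strategy is to localize near the intersection of the two singular supports and, at each such point, reduce the multiplication problem to the Sobolev-microlocal product theorem established earlier in the paper in the spirit of \cite[Thm 8.2.10]{Hor03}. Accordingly, the first step is to pick a partition of unity on $\mathbb{R}^d$ that separates three types of regions: open sets disjoint from $\operatorname{singsupp}(f)$, open sets disjoint from $\operatorname{singsupp}(g)$, and a neighbourhood $U$ of $\operatorname{singsupp}(f)\cap\operatorname{singsupp}(g)$. On the first two kinds of regions at least one of the factors is locally smooth and the product is unambiguously defined by the standard action of $C^\infty$ on $\mathcal{D}'$, so the entire analytical content of the theorem is concentrated on $U$.

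Next I would work at a fixed $x\in\operatorname{singsupp}(f)\cap\operatorname{singsupp}(g)$, with the cut-offs $\varphi_f,\varphi_g\in\mathcal{D}(\mathbb{R}^d)$ given by hypothesis. Since $\alpha>0$, the fact that $\varphi_f f\in \mathcal{C}^\alpha_c(\mathbb{R}^d)$ together with the embeddings between Besov, Hölder and fractional Sobolev spaces discussed in the previous section entail that $\varphi_f f$ lies in every $H^{\alpha-\varepsilon}(\mathbb{R}^d)$ for $\varepsilon>0$ small. For $\varphi_g g$, although the raw Hölder exponent $\beta<0$ would be insufficient, the refined quantity $\beta^\ast_{\varphi_g g}$ encodes, by construction, the best fractional Sobolev regularity compatible with the Sobolev wavefront set of $\varphi_g g$; hence $\varphi_g g$ enjoys Sobolev regularity of order $\beta^\ast_{\varphi_g g}-\varepsilon$ in every direction of the cotangent space. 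The inequality \eqref{Eq: Novel Young condition intro} then guarantees that $\varepsilon$ can be chosen so small that
\[
(\alpha-\varepsilon)+(\beta^\ast_{\varphi_g g}-\varepsilon)>0,
\]
which is precisely the threshold needed to apply the Sobolev-microlocal multiplication theorem. This yields a well-defined distribution $(\varphi_f f)\cdot(\varphi_g g)\in\mathcal{D}'(\mathbb{R}^d)$ that represents $f\cdot g$ on the open set $\{\varphi_f\varphi_g\neq 0\}\ni x$.

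The final step is to glue these local products into a single element of $\mathcal{D}'(\mathbb{R}^d)$. This requires verifying that on overlaps of two localizing patches the two candidate products coincide as distributions, and that the answer does not depend on the particular choice of cut-offs $\varphi_f,\varphi_g$ at $x$. This is the content of the canonicity statement recalled in Remark \ref{Rmk: canonicity}: because the Sobolev-microlocal product is itself canonical on its natural domain, compatibility on overlaps is automatic, and the partition of unity then assembles the pieces into $f\cdot g\in\mathcal{D}'(\mathbb{R}^d)$.

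I expect the main technical obstacle to be the directional bookkeeping at the interface between the Besov/Hölder description of $\varphi_g g$ and its Sobolev-wavefront description through $\beta^\ast_{\varphi_g g}$. One must show that the improved exponent $\beta^\ast_{\varphi_g g}$ really does control the relevant microlocal regularity uniformly in the cotangent directions paired with those of $\varphi_f f$, so that \eqref{Eq: Novel Young condition intro} translates into the hypothesis of the microlocal product theorem in every cone where the wavefront sets of the two factors interact. Once this translation is in place, partition of unity and canonicity turn the argument into an essentially formal assembly.
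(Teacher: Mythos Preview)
Your proposal is correct and follows essentially the same route as the paper: localize, use the Besov-to-Sobolev embeddings (Theorems \ref{th:inclusions_negative_regularity} and \ref{th:inclusions_positive_regularity}) to place $\varphi_f f\in H^{s_1}$ and $\varphi_g g\in H^{s_2}$ with $s_1+s_2\geq 0$, and invoke Proposition \ref{Prop: Sobolev product of distributions}. The directional bookkeeping you anticipate is in fact vacuous: since $\beta^\ast_{\varphi_g g}$ is defined through \emph{global} Besov membership (Equation \eqref{Eq: beta*}), the embeddings give $\mathrm{WF}^{s_1}(\varphi_f f)=\mathrm{WF}^{s_2}(\varphi_g g)=\emptyset$, so the hypothesis of Proposition \ref{Prop: Sobolev product of distributions} is satisfied trivially in every direction and no cone-by-cone analysis is needed.
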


\noindent Subsequently we are able to further refine this statement through restricting our attention to H\"older spaces. 
In particular we prove that, when existent, the product between two distributions is a continuous map.

\begin{theorem}\label{Thm: microlocal Young II_intro}
Let $f \in \mathcal{C}_{\mathrm{loc}}^\alpha(\mathbb{R}^d)$ and $g \in \mathcal{C}_{\mathrm{loc}}^\beta(\mathbb{R}^d)$ with $\alpha > 0$ and $\beta < 0$. If $\text{singsupp}(f) \cap \text{singsupp}(g) \neq \emptyset$ and 
\begin{align}\label{Eq: novel Young condition II intro}
\alpha + \beta^\ast_g > 0\,,
\end{align} 
there exists the product $f \cdot g \in \mathcal{C}_{\mathrm{loc}}^\beta(\mathbb{R}^d)$. Moreover for any compact set $K \subset \mathbb{R}^d$, one has
\begin{align}\label{Eq: continuity of microlocal young product intro}
\|f\cdot g\|_{\mathcal{C}^{\beta}(K)}\lesssim\|f\|_{\mathcal{C}^{\alpha}(\overline{K}_1)}\|g\|_{\mathcal{C}^{\beta}(\overline{K}_1)}\,,
\end{align}
where $\overline{K}_1$ denotes the $1$-enlargement of $K$.
\end{theorem}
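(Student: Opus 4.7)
The plan is to deduce the existence of $f\cdot g$ from Theorem \ref{Thm: microlocal Young_intro} and then to upgrade the resulting distribution to $\mathcal{C}^\beta_{\mathrm{loc}}$ with quantitative control, by combining localisation with Bony's paraproduct decomposition.

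First I would obtain $f\cdot g$ as a distribution. For each $x\in\operatorname{singsupp}(f)\cap\operatorname{singsupp}(g)$, pick $\varphi_f,\varphi_g\in\mathcal{D}(\mathbb{R}^d)$ with $\varphi_f(x),\varphi_g(x)\neq 0$. Since $f\in\mathcal{C}^\alpha_{\mathrm{loc}}$ and $g\in\mathcal{C}^\beta_{\mathrm{loc}}$, one has $\varphi_f f\in\mathcal{C}^\alpha_c$ and $\varphi_g g\in\mathcal{C}^\beta_c$; moreover, since multiplication by a test function does not worsen the Sobolev micro-regularity, $\beta^\ast_{\varphi_g g}\geq\beta^\ast_g$, so \eqref{Eq: novel Young condition II intro} implies \eqref{Eq: Novel Young condition intro}. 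Theorem \ref{Thm: microlocal Young_intro} then produces a canonical product $f\cdot g\in\mathcal{D}^\prime(\mathbb{R}^d)$.

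To place $f\cdot g$ in $\mathcal{C}^\beta_{\mathrm{loc}}$, I would use Bony's decomposition
\begin{equation*}
fg=T_f g+T_g f+\Pi(f,g).
\end{equation*}
The two paraproducts are handled by standard continuity results: since $\alpha>0$, one has $\|T_f g\|_{\mathcal{C}^\beta}\lesssim\|f\|_{L^\infty}\|g\|_{\mathcal{C}^\beta}$ and $\|T_g f\|_{\mathcal{C}^{\alpha+\beta}}\lesssim\|f\|_{\mathcal{C}^\alpha}\|g\|_{\mathcal{C}^\beta}$, the latter embedding into $\mathcal{C}^\beta$ because $\alpha>0$. After multiplying by a plateau function equal to $1$ on $K$ and supported in $\overline{K}_1$, these estimates localise and account for the first two blocks.

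The main obstacle is the resonant term $\Pi(f,g)$, which classically demands $\alpha+\beta>0$ whereas we only have the strictly weaker $\alpha+\beta^\ast_g>0$. To close this gap I would split $K=(K\setminus U)\cup U$, where $U$ is a small neighbourhood of $\operatorname{singsupp}(f)\cap\operatorname{singsupp}(g)$. Off $U$ at least one factor is smooth, and the diagonal block is $\mathcal{C}^\infty$ with bounds controlled by the $\mathcal{C}^\alpha,\mathcal{C}^\beta$ norms on $\overline{K}_1$. On $U$, the Sobolev wavefront information already used for Theorem \ref{Thm: microlocal Young_intro} translates, via the dyadic characterisation of $B^\beta_{\infty,\infty}$, into an improved estimate on $\sum_{|j-k|\leq 1}\Delta_j f\,\Delta_k g$: $\beta^\ast_g$ records the best Sobolev exponent of $g$ along the directions in which $f$ is singular, and $\alpha+\beta^\ast_g>0$ is exactly what forces the corresponding dyadic series to converge in $\mathcal{C}^\beta$. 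Tracking the constants through the paraproduct estimates, the cut-off to $\overline{K}_1$, and the partition of unity then yields \eqref{Eq: continuity of microlocal young product intro}; the enlargement of $K$ appears because the Littlewood--Paley kernels smear supports by a bounded amount.
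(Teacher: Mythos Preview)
Your first step---reducing existence to Theorem~\ref{Thm: microlocal Young_intro}---matches the paper exactly. The divergence is in the regularity and continuity estimate, and there your paraproduct approach has a genuine gap.

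The resonant term is the crux, and your treatment of it rests on a misreading of $\beta^\ast_g$. By definition (Equation~\eqref{Eq: beta*}), $\beta^\ast_g=\sup\{\gamma\le 0: g\in B^\gamma_{p,\infty,\mathrm{loc}}\text{ for some }p\in[2,\infty]\}$ depends only on $g$ and carries no directional content whatsoever; it has nothing to do with ``the directions in which $f$ is singular''. In the paper it enters only through the embedding $B^\gamma_{p,\infty,\mathrm{loc}}\hookrightarrow H^s_{\mathrm{loc}}$ for $s<\gamma$ (Theorems~\ref{th:inclusions_negative_regularity}--\ref{th:inclusions_positive_regularity}), which makes the Sobolev wavefront set of $g$ empty at every level $s<\beta^\ast_g$. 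There is thus no microlocal refinement available for $\Pi(f,g)$ beyond what the flat Besov scale already gives. Concretely: if you exploit $g\in B^\gamma_{p,\infty}$ with $\gamma$ near $\beta^\ast_g$ and $p\ge 2$, the resonant term lands in $B^{\alpha+\gamma}_{p,\infty}\hookrightarrow\mathcal{C}^{\alpha+\gamma-d/p}$, and nothing in the hypotheses forces $\alpha+\gamma-d/p\ge\beta$. Worse, even if that inclusion held, the resulting bound would involve $\|g\|_{B^\gamma_{p,\infty}}$ rather than $\|g\|_{\mathcal{C}^\beta}$, so you would not obtain \eqref{Eq: continuity of microlocal young product intro} as stated.

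The paper's route is both different and much shorter. Once existence is secured---and this is the \emph{only} place $\alpha+\beta^\ast_g>0$ is invoked---the product acts as $(f\cdot g)(\psi)=g(f\psi)$, and the $\mathcal{C}^\beta$ estimate follows by an elementary argument: Taylor-expand $f(y)=P_x(y)+R(x,y)$ about the scaling centre $x$, so that $(f\cdot g)(\psi^\lambda_x)=g(P_x\,\psi^\lambda_x)+g(R(x,\cdot)\,\psi^\lambda_x)$. Each monomial satisfies $(\cdot-x)^k\psi^\lambda_x=\lambda^{|k|}\eta^\lambda_x$ for a new test function $\eta$, so the polynomial part is bounded by $\sum_{|k|<\alpha}\|\partial^k f\|_{L^\infty}\,\lambda^{|k|}\,\|g\|_{\mathcal{C}^\beta(\overline{K}_1)}\,\lambda^\beta$, while the remainder contributes at order $\lambda^{\alpha+\beta}\le\lambda^\beta$. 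This uses only $\alpha>0$ and the scaling characterisation of $\mathcal{C}^\beta$ from Definition~\ref{Def: C-alpha-loc-negativi}; no paraproducts, no further appeal to $\beta^\ast_g$, and the factor $\|g\|_{\mathcal{C}^\beta(\overline{K}_1)}$ appears directly.
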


We observe that the condition $\alpha+\beta>0$ is no longer strictly necessary, as we shall clarify with concrete examples throughout this work.
We observe that both Theorem \ref{Thm: microlocal Young_intro} and Theorem \ref{Thm: microlocal Young II_intro} require the H\"older regularity $\alpha$ of one of the two factors to be positive. 
In Section \ref{Sec: microlocal improvement} we also discuss some scenarios where the product can be (canonically) constructed also if both the regularities $\alpha$ and $\beta$ are negative. 
This is achieved through an extension procedure making use of the notion of \emph{scaling degree}, which we briefly outline in Appendix \ref{App: sd}.

As an application of these results we focus our attention to the recently introduced notion of coherent germs of distributions \cite{CZ20}. 
Most notable it allows to reformulate Hairer's reconstruction theorem from a purely distributional viewpoint and, in turn, to give an alternative derivation of the classical Young's product theorem. 
Here we show that our results can be combined with the framework introduced in \cite{CZ20} to improve also on this notable application of the reconstruction theorem.

\paragraph{Outline of the paper} 
The paper is organized as follows. In Section \ref{Sec: preliminaries} we introduce the function and distribution spaces which are relevant throughout the paper, particularly Besov, H\"older and Sobolev spaces. 
Moreover we discuss the interplay between these spaces which will be among the main ingredients for proving a microlocal version of Young's product theorem using techniques proper of Sobolev wave-front sets, a topic introduced in Section \ref{Sec: Sobolev WF}.

In Section \ref{Sec: microlocal improvement} we present the main results of this paper, see Theorems \ref{Thm: microlocal Young_intro} and \ref{Thm: microlocal Young II_intro}. 
In this section we also discuss some concrete examples of applicability.

In Section \ref{Sec: Application to germs} we discuss an application of the results of Section \ref{Sec: microlocal improvement} to the framework of coherent germs of distributions. 
In particular, we discuss an identification of a canonical reconstruction for a suitable class of germs of distributions in cases where a canonical choice, from the point of view of the sole reconstruction theorem, is missing. 
Finally in Appendix \ref{App: sd} we outline the notion of scaling degree and its main properties, since it plays a key r\^{o}le in our construction.

\paragraph{Notations}
In this short paragraph we fix a few recurring notations of this work. 
More precisely, with $\mathcal{E}(\mathbb{R}^d)$ ({\em resp.} $\mathcal{D}(\mathbb{R}^d))$, we indicate the space of smooth ({\em resp.} smooth and compactly supported) functions, while $\mathcal{S}(\mathbb{R}^d)$ refers to the space of rapidly decreasing smooth functions. Given any function $\psi$ lying in any of these spaces, we shall employ the notation $\psi_x$, $x\in\mathbb{R}^d$ to refer to $\psi_x(y):=\psi(y-x)$.
Their topological dual spaces are denoted respectively $\mathcal{E}^\prime(\mathbb{R}^d)$, $\mathcal{D}^\prime(\mathbb{R}^d)$ and $\mathcal{S}^\prime(\mathbb{R}^d)$. 
In addition, given $u\in\mathcal{S}(\mathbb{R}^d)$, we adopt the following convention to define its Fourier transform
\begin{align*}
\mathcal{F}(u)(\xi)=\widehat{u}(\xi)\vcentcolon=\int_{\mathbb{R}^d}e^{-ik\cdot\xi}u(x)\,dx\,.
\end{align*}
Similarly, for any $v\in\mathcal{S}^\prime(\mathbb{R}^d)$, we indicate with $\widehat{v}\in\mathcal{S}^\prime(\mathbb{R}^d)$ its Fourier transform, defining it per duality as $
\widehat{v}(u)\doteq v(\widehat{u})$ for all $u\in\mathcal{S}(\mathbb{R}^d)$. 
With $\langle x\rangle\vcentcolon=(1+|x|^2)^{\frac{1}{2}}$ we will denote the Japanese bracket, while the symbol $\lesssim$ shall refer to an inequality holding true up to a multiplicative finite constant. 
Finally, given a compact set $K\subset\mathbb{R}^d$, we shall call $R$-enlargement of $K$ 
\begin{align*}
\overline{K}_R\vcentcolon=\{x\in\mathbb{R}^d\;|\;|x-y|\leq R\,,\;\mathrm{for}\;\mathrm{some}\;y\in K\}\,.
\end{align*}
Finally, for any $\alpha\in\mathbb{R}$, we denote $\lfloor \alpha\rfloor\vcentcolon=\max\{n\in\mathbb{Z}:n<\alpha\}$,

\paragraph{Acknowledgements} We are thankful to N. Drago, D. Lee,  M. Wrochna and L. Zambotti for helpful discussions and comments. The work of F.S. is supported by a scholarship of the University of Pavia.

\section{Preliminaries}\label{Sec: preliminaries}
The aim of this section is twofold: On the one hand, we introduce the function and distribution spaces relevant to us, together with the key microlocal structures, in particular the \textit{Sobolev wave-front set}, see \cite{DH72}. 
On the other hand, we discuss some relations between some of these spaces, as preliminary tools to establish a microlocal version of Young's theorem. 

\subsection{Function Spaces}\label{Sec: function spaces}
In this section we give a succinct outline of the main function spaces, relevant for our analysis, referring to the existing literature for further details.

\paragraph{Besov Spaces}
We start from \textit{Besov spaces} \cite{BCD11}. 
To this end, the first step consists of introducing a \textit{Littlewood-Paley} partition of unity, which is a suitable partition of unity in Fourier space. Let
\begin{align*}
\mathcal{O}(\mathbb{R}^d)\vcentcolon=\{f\in\mathcal{E}(\mathbb{R}^d)\,|\,\forall\,\beta\in\mathbb{N},\;\exists\,C,N>0\quad\mathrm{s.t.}\quad|D^\beta f(x)|\leq C\langle x\rangle^N\}\,,
\end{align*}
and let $m(\xi)\in\mathcal{O}(\mathbb{R}^d)$. 
We call \textit{Fourier multiplier operator} the continuous map $m(D):\mathcal{S}^\prime(\mathbb{R}^d)\to\mathcal{S}^\prime(\mathbb{R}^d)$
\begin{align*}
m(D)u\vcentcolon=\mathcal{F}^{-1}\{m(\xi)\widehat{u}(\xi)\}\,,
\end{align*}
where $\mathcal{F}^{-1}$ denotes the inverse Fourier transform operator. 

\begin{definition}\label{Def: L-P partition of unity}
Let $N\in\mathbb{N}$ and let $\psi\in\mathcal{D}(\mathbb{R}^d)$ be a nonnegative function supported in $\{2^{-N}\leq|\xi|\leq2^N\}$. We call \textit{Littlewood-Paley} partition of unity the sequence $\{\psi_j\}_{j\in\mathbb{N}_0}$, $\mathbb{N}_0\doteq\mathbb{N}\cup\{0\}$ such that
\begin{itemize}
\item $\psi_0\in\mathcal{D}(\mathbb{R}^d)$ and $\textrm{supp}(\psi_0)\subseteq\{|\xi|\leq2^N\}$;	
\item $\psi_j(x)\vcentcolon=\psi(2^{-j}x)$ for $j\geq1$;
\item $\sum\limits_{j\in\mathbb{N}_0}\psi_j(\xi)=1$ for all $\xi\in\mathbb{R}^d$;
\item for any multi-index $\alpha$, $\exists C_\alpha>0$ such that
\begin{align*}
|D^\alpha\psi_j(\xi)|\leq C_\alpha\langle\xi\rangle^{-|\alpha|}\,,\qquad j\geq1\,.
\end{align*}
\end{itemize}
\end{definition}
\noindent
In the following we shall always assume $N=1$.

\begin{remark}\label{Rmk: weighted ell^p-spaces}
Given a Banach space $A$, $s\in\mathbb{R}$ and $p\in[1,\infty]$, the set $\ell_p^s(A)$ denotes the space of sequences $a=\{a_k\}_{k\in\mathbb{N}_0}$, $a_k\in A$, such that it is finite
\begin{align*}
\|a\|_{\ell_p^s(A)}\vcentcolon=
\begin{cases}
\bigg(\sum_{k\geq0}(2^{ks}\|a_k\|_A)^p\bigg)^{\frac{1}{p}}\,,\qquad&\mathrm{if}\quad p<\infty\,,\\
\sup_{k\geq0}\,2^{ks}\|a_k\|_A\,,\qquad&\mathrm{if}\quad p=\infty\,,
\end{cases}.
\end{align*}
\end{remark}
\begin{definition}\label{Def: Besov spaces}
Let $\alpha\in\mathbb{R}$, $p,q\in[1,\infty]$ and let $\{\psi_j\}_{j\in\mathbb{N}_0}$ be a Littlewood-Paley partition of unity as per Definition \ref{Def: L-P partition of unity}. 
We call (non-homogeneous) {\bf Besov space} $B^\alpha_{p,q}(\mathbb{R}^d)$ the space of $u\in\mathcal{S}^\prime(\mathbb{R}^d)$ such that $u_\psi=\{\psi_j(D)u\}_{j\in\mathbb{N}_0}\in \ell^s_q(L^p)$, endowed with norm
\begin{align*}
\|u\|_{B^\alpha_{p,q}(\mathbb{R}^d)}\vcentcolon=\|u_\psi\|_{\ell^\alpha_q(L^p)}\,.
\end{align*}
\end{definition}

\noindent The definition does not depend on the choice of the Littlewood-Paley partition of unity.

\begin{remark}\label{Rmk: Besov spaces with p=q=infty}
In the following we shall be particularly interested in the special cases $p=q=\infty$, in particular in $B^\alpha_{\infty,\infty}(\mathbb{R}^d)$ with $\alpha\in\mathbb{R}$, where, for any $u\in B^\alpha_{\infty,\infty}(\mathbb{R}^d)$
\begin{align*}
\|u\|_\alpha\equiv \|u\|_{B^\alpha_{\infty,\infty}(\mathbb{R}^d)}=\sup_{j\geq0}2^{j\alpha}\|\psi_j(D)u\|_{L^\infty(\mathbb{R}^d)}\,.
\end{align*}
\end{remark}

We observe that one can define Besov spaces without resorting to any Littlewood-Paley partition of unity, as in \cite{BL21, HL17}. This fact can be codified in the following Proposition \ref{Def: Besov spaces}, see \cite[Sec. 1.4]{Tri06}.

\begin{proposition}
\label{def:besov}
Let $\alpha \in \mathbb{R}$ and $p,q \in [1,\infty]$. 
Moreover let $r \in \mathbb{N}$ be such that $r > -\alpha$. 
We call $B^\alpha_{p,q}(\mathbb{R}^d)$, $d\geq 1$, the space of $f \in \mathcal{D}^\prime(\mathbb{R}^d)$ such that, for $n\in\mathbb{N}$,
\begin{equation}
\begin{cases}
\bigg\| \bigg\| \sup\limits_{\psi\in\mathcal{B}^r}\bigg\lvert\frac{f(\psi^{2^{-n}}_x)}{2^{-n\alpha}} \bigg\rvert \bigg\|_{L^p(\mathbb{R}^d)} \bigg\|_{\ell^q(\mathbb{N})} < + \infty\,, &\text{if} \quad \alpha < 0\,, \\
\|\sup\limits_{\psi \in \mathcal{B}^r} \lvert f(\psi_x) \rvert\|_{L^p(\mathbb{R}^d)} + \bigg\| \bigg\| \sup\limits_{\psi \in \mathcal{B}^r_{\lfloor \alpha \rfloor}}\bigg \lvert \frac{f(\psi^{2^{-n}}_x)}{2^{-n\alpha}} \bigg\rvert \bigg\|_{L^p(\mathbb{R}^d)} \bigg\|_{\ell^q(\mathbb{N})} < + \infty\,,&\text{if} \quad \alpha \geq 0,
\end{cases}
\end{equation}
where $\mathcal{B}^r$ is the space of test functions $\psi\in \mathcal{D}(\mathbb{R}^d)$ such that $\|\psi\|_{C^r} \leq 1$ while $\mathcal{B}^r_{\lfloor \alpha \rfloor}$ is the space of test functions $\psi \in \mathcal{B}^r$ such that $\int x^k \psi(x) dx = 0$ for any multi-index $k$ such that $0 \leq \lvert k \rvert \leq \lfloor \alpha \rfloor$. 
\end{proposition}

Observe that the previous statement does not depend on the choice of $r > -\alpha$. 
At this stage, it is rather useful to introduce a local version of the spaces $B^\alpha_{p,q}(\mathbb{R}^d)$. 
Locality is established requiring the bounds of Definition \eqref{def:besov} to hold true with $L^p(\mathbb{R}^d)$ being replaced by $L^p(K)$ for any compact set $K\subset \mathbb{R}^d$.

\begin{proposition}
\label{def:besov_loc}
Let $\alpha \in \mathbb{R}$ and $p,q \in [1,\infty]$.
Moreover let $r \in \mathbb{N}$ such that $r > -\alpha$. 
We call $B^\alpha_{p,q,\mathrm{loc}}(\mathbb{R}^d)$, $d\geq 1$, the space of distributions $f \in \mathcal{D}^\prime(\mathbb{R}^d)$ such that for all compact sets $K \subset \mathbb{R}^d$, for $n\in\mathbb{N}$,
\begin{equation}\label{Eq: Local Besov norms}
	\begin{cases}
		\bigg\| \bigg\| \sup\limits_{\psi \in \mathcal{B}^r}\bigg \lvert \frac{f(\psi^{2^{-n}}_x)}{2^{-n\alpha}} \bigg\rvert \bigg\|_{L^p(K)} \bigg\|_{\ell^q(\mathbb{N})} < + \infty\,,& \text{if} \quad \alpha < 0\,, \\
		\|\sup\limits_{\psi \in \mathcal{B}^r} \lvert f(\psi_x) \rvert\|_{L^p(K)} + \bigg\| \bigg\| \sup\limits_{\psi \in \mathcal{B}^r_{\lfloor \alpha \rfloor}}\bigg \lvert \frac{f(\psi^{2^{-n}}_x)}{2^{-n\alpha}} \bigg\rvert \bigg\|_{L^p(K)} \bigg\|_{\ell^q(\mathbb{N})} < + \infty\,,&\text{if}\quad\alpha\geq0\,.
	\end{cases}
\end{equation}
\end{proposition}

\noindent We recall some further characterizations of Besov spaces which we shall exploit later. 
For the proof of this result we refer to \cite{BL21}.

\begin{theorem}
\label{th:equivalent_norms}
Let $\alpha \in \mathbb{R}$ and $p,q\in [1,+\infty]$. 
Then, using the same notation and nomenclature as in Definition \ref{def:besov}, the following statements hold true:
\begin{itemize}
\item If $\alpha < 0$, then $f \in B^{\alpha}_{p,q}(\mathbb{R}^d)$ if and only if $f\in\mathcal{D}^\prime(\mathbb{R}^d)$ and $\exists\varphi \in \mathcal{D}(\mathbb{R}^d)$ with $\int\limits_{\mathbb{R}^d}dx\, \varphi(x) \neq 0$ such that
\begin{equation}
\label{eq:negative regularity besov}
\bigg\| \bigg\| \frac{f(\varphi^{2^{-n}}_x)}{2^{-n\alpha}} \bigg\|_{L^p(\mathbb{R}^d)} \bigg\|_{\ell^q(\mathbb{N})} < +\infty\,;
\end{equation}
\item If $\alpha > 0$ and $\alpha \not \in \mathbb{N}$, then $f \in B^{\alpha}_{p,q}(\mathbb{R}^d)$ if and only if $f\in\mathcal{D}^\prime(\mathbb{R}^d)$ and for every multi-index $k$ such that $0 \leq \lvert k \rvert < \alpha$, $\partial^k f \in L^p(\mathbb{R}^d)$ and for any $h_0 > 0$,
\end{itemize}
\begin{equation}
\bigg\| \bigg\| \frac{\partial^k f(x+h) - \sum_{0 \leq \lvert \ell \rvert < \alpha - \lvert k \rvert} \frac{1}{\ell!} \partial^{k+\ell}f(x) h^\ell}{h^{\alpha - \lvert k \rvert}} \bigg \|_{L^p(\mathbb{R}^d)} \bigg \|_{L_h^q(B(0,h_0))} < +\infty\,,
\end{equation}
where $h\in B(0,h_0)$, while $L_h^q(B(0,h_0)):=L^q\bigg(B(0,h_0), \frac{dh}{\lvert h \rvert^d}\bigg)$.
\end{theorem}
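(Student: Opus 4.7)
The plan is to establish each of the two equivalences separately, treating the forward direction as essentially immediate from Definition \ref{def:besov} and the converse as an instance of a reproducing-formula argument.

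For the negative regularity case, the implication $f\in B^\alpha_{p,q}(\mathbb{R}^d) \Rightarrow \eqref{eq:negative regularity besov}$ would follow at once: given any $\varphi \in \mathcal{D}(\mathbb{R}^d)$ with $\int \varphi \neq 0$ and any $r > -\alpha$, the rescaled test function $\varphi/\|\varphi\|_{C^r}$ lies in $\mathcal{B}^r$, so $|f(\varphi^{2^{-n}}_x)|$ is pointwise dominated, up to the multiplicative constant $\|\varphi\|_{C^r}$, by $\sup_{\psi \in \mathcal{B}^r}|f(\psi^{2^{-n}}_x)|$. For the converse I would deploy a Calderón-type reproducing formula: since $\hat\varphi(0) = \int \varphi \neq 0$, one can construct a partner $\eta \in \mathcal{S}(\mathbb{R}^d)$ such that $\sum_{k \in \mathbb{Z}} \hat\varphi(2^{-k}\xi)\hat\eta(2^{-k}\xi) = 1$ for $\xi \neq 0$, and exploit it to decompose any $\psi \in \mathcal{B}^r$ at scale $2^{-n}$ as a discrete superposition of dilates of $\varphi$ at finer dyadic scales. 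Pairing with $f$ and pushing the resulting sum through the Minkowski and discrete Young inequalities on $\ell^q$ should then convert the single-test-function bound back into the supremum bound of Definition \ref{def:besov}.

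For the positive non-integer case, the plan for $(\Rightarrow)$ is to first realize each $\partial^k f$, $|k|<\alpha$, as an $L^p$-function by testing $f$ against $\partial^k$ of a localizing bump and identifying the limit with the weak derivative; the Taylor-remainder bound would then follow by observing that for $\psi \in \mathcal{B}^r_{\lfloor \alpha \rfloor}$ the polynomial coefficients of $f$ of order at most $\lfloor\alpha\rfloor$ are annihilated, which makes the passage from the dyadic scales $2^{-n}$ to the continuous variable $h\in B(0,h_0)$ routine via a change of variable and sub-/super-additivity of dyadic sums. For $(\Leftarrow)$, I would use Taylor's integral remainder to rewrite $f(\psi^{2^{-n}}_x)$ as an integral of $\partial^{\lfloor\alpha\rfloor} f$ minus its Taylor polynomial against a kernel localized at scale $2^{-n}$, and then exploit the vanishing moments of $\psi \in \mathcal{B}^r_{\lfloor\alpha\rfloor}$ to conclude the supremum bound appearing in Definition \ref{def:besov}.

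The main technical obstacle I expect lies in the converse direction of the negative regularity case, where the reproducing formula must be set up so that the size, support and decay of the decomposition kernels at every dyadic scale can be controlled effectively under the $L^p(\ell^q)$ norm; here the discrete Young inequality and the rapid decay of $\hat\varphi$ and $\hat\eta$ will drive the estimates. I would carry the argument only up to the level needed to recover the statement, referring to \cite{BL21} and \cite{Tri06} for the detailed kernel calculations.
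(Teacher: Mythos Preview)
The paper does not actually give a proof of this theorem: it states the result and refers the reader to \cite{BL21} for the proof. Your proposal is therefore more detailed than what the paper itself supplies, and since you likewise defer the technical kernel estimates to \cite{BL21} and \cite{Tri06}, your sketch is entirely consistent with the paper's treatment; there is no discrepancy to flag.
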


\begin{remark}
A counterpart of Theorem \ref{th:equivalent_norms} holds true also for local Besov spaces, provided that the $L^p(\mathbb{R}^d)$-norms are replaced by the $L^p(K)$-norms for every compact $K\subset \mathbb{R}^d$. 
\end{remark}

\begin{remark}
Observe that, as proven in \cite[Corollary 1.12]{Tri06}, if Equation \eqref{eq:negative regularity besov} holds true for one $\varphi \in\mathcal{D}(\mathbb{R}^d)$ with $\int\limits_{\mathbb{R}^d}dx\, \varphi(x) \neq 0$, then it must hold true for all such functions.
\end{remark}

\subparagraph{H\"older Spaces}
As anticipated in Remark \ref{Rmk: Besov spaces with p=q=infty}, in the following we shall be interested in a distinguished class of Besov spaces, namely $B^\alpha_{\infty,\infty}(\mathbb{R}^d)$ with $\alpha\in\mathbb{R}$, since, when $\alpha\notin\mathbb{N}$ they can be identified as H\"older spaces as a consequence of Theorem \ref{th:equivalent_norms}. These can be defined independently, separating the cases $\alpha> 0$ from those with $\alpha< 0$, see also \cite{KNVW08}. 
We start from the former.
\begin{definition}\label{Def: C-alpha-loc}
Let $\alpha>0$. 
The space $\mathcal{C}^\alpha_{\mathrm{loc}}(\mathbb{R}^d)$ of locally $\alpha$-H\"{o}lder functions  consists of all $f\in C^{\lfloor\alpha\rfloor}(\mathbb{R}^d)$ such that, for any compact set $K\subset\mathbb{R}^d$,
\begin{align*}
\lvert f(y) - P_x(y)\rvert\lesssim\lvert x-y\rvert^\alpha\,,
\end{align*}
uniformly for $x,y \in K$, where $P_x(\cdot)$ is the Taylor polynomial of order $\lfloor \alpha\rfloor$ of $f$ centered at $x$, \textit{i.e.},
\begin{align*}
P_x(y) \vcentcolon=\sum_{\lvert k\rvert<\alpha}\partial^kf(x)\frac{(y-x)^k}{k!}\,, \qquad y\in\mathbb{R}^d\,.
\end{align*}
Moreover, for all $\alpha\in\mathbb{R}_+$, we endow $\mathcal{C}^\alpha_{\mathrm{loc}}(\mathbb{R}^d)$ with the following seminorms: 
\begin{align*}
\|f\|_{\mathcal{C}^\alpha(K)}\vcentcolon=\max\bigg\{\|f\|_{C^{\lfloor\alpha\rfloor}(K)}\,,\sup_{x,y \in K}\frac{\lvert f(y)-P_x(y)\rvert}{\lvert x-y\rvert^\alpha}\bigg\}\,,
\end{align*}
where $K$ runs over all compact sets in $\mathbb{R}^d$.
\end{definition}

\noindent If we focus the attention on $\alpha<0$ the following definition holds true.

\begin{definition}\label{Def: C-alpha-loc-negativi}
Let $\alpha\in(-\infty,0)$, we call $\mathcal{C}^\alpha_{\mathrm{loc}}(\mathbb{R}^d)$ the space of distributions $u\in\mathcal{D}^\prime(\mathbb{R}^d)$ such that, for any compact set $K\subset\mathbb{R}^d$,
\begin{align}\label{Eq: Holder characterization of negative regularity}
\lvert u(\varphi^\lambda_x) \rvert \lesssim \lambda^\alpha\,,
\end{align}
uniformly for $x,y \in K$, for $\lambda \in (0,1]$ and for $\varphi \in \mathcal{B}_{-\lfloor \alpha \rfloor}$, where $\mathcal{B}_{-\lfloor \alpha \rfloor}:=\{\phi \in \mathcal{D}(B(0,1)) : \|\phi\|_{C^{-\lfloor \alpha \rfloor}(\mathbb{R}^d)} \leq 1\}$. Here $\varphi^\lambda_x(y):=\lambda^{-d}\varphi(\frac{y-x}{\lambda})$.
Moreover we endow $\mathcal{C}^\alpha_{\mathrm{loc}}(\mathbb{R}^d)$ with the following seminorms:
\begin{align*}
\|u\|_{\mathcal{C}^\alpha(K)} := \sup_{x\in K, \varphi \in \mathcal{B}_{-\lfloor \alpha \rfloor}, \lambda \in (0,1]} \frac{\lvert u(\varphi^\lambda_x) \rvert}{\lambda^\alpha}\,.
\end{align*}
\end{definition}

\noindent To conclude, on account of Theorem \ref{th:equivalent_norms}, we observe that
\begin{itemize}
	\item $\mathcal{C}^\alpha(\mathbb{R}^d)\equiv B^\alpha_{\infty,\infty}(\mathbb{R}^d)$ if $\alpha\notin\mathbb{N}$ while $\mathcal{C}^\alpha(\mathbb{R}^d)\subset B^\alpha_{\infty,\infty}(\mathbb{R}^d)$ if $\alpha\in\mathbb{N}$;
	\item $B^\alpha_{\infty,\infty, c}(\mathbb{R}^d)\vcentcolon=B^\alpha_{\infty,\infty,\mathrm{loc}}(\mathbb{R}^d)\cap\mathcal{E}^\prime(\mathbb{R}^d)$ while $\mathcal{C}^\alpha_{c}(\mathbb{R}^d)\vcentcolon=\mathcal{C}^\alpha_{\mathrm{loc}}(\mathbb{R}^d)\cap\mathcal{E}^\prime(\mathbb{R}^d)$.
\end{itemize}



\paragraph{Fractional Sobolev Spaces}
In this paragraph we recall the notion of fractional Sobolev spaces.

\begin{definition}
Let $s\in\mathbb{R}$. We call {\bf Sobolev space (of order $s$)} 
$$H^s(\mathbb{R}^d):=\{ u\in\mathcal{S}^\prime(\mathbb{R}^d)\;|\;
\langle D\rangle^su\in L^2(\mathbb{R}^d)\},$$
where $\langle D\rangle^su\vcentcolon=\mathcal{F}^{-1}\{\langle\xi\rangle^s\widehat{u}(\xi)\}$.
Moreover we say that $u\in H_{\mathrm{loc}}^s(\mathbb{R}^d)$ if, for any $\varphi\in\mathcal{D}(\mathbb{R}^d)$, $\varphi u\in H^s(\mathbb{R}^d)$.
\end{definition}

\begin{remark}\label{Rmk: Fourier Plancherel}
We recall that, on account of Fourier-Plancherel theorem \cite{Hor03}, a distribution  $u\in\mathcal{S}^\prime(\mathbb{R}^d)$ lies in $H^s(\mathbb{R}^d)$ if and only if
\begin{align*}
\int_{\mathbb{R}^d}\langle\xi\rangle^{2s}\lvert\widehat{u}(\xi)\rvert^2\,d\xi<\infty\,.
\end{align*}
\end{remark}

\paragraph{Embedding in Fractional Sobolev Spaces} In this paragraph, we prove two theorems concerning the embedding of Besov spaces into suitable fractional Sobolev spaces.
In order to smoothen the reading, we consider separately the case of positive and negative regularity.
These two theorems play a key r\^ole in the proof of the microlocal version of Young's product theorem.

\begin{theorem}
\label{th:inclusions_negative_regularity}
Let $\alpha < 0$. Then 
\begin{itemize}
\item[(a)]If $p\geq 2$ and $q > 2$, $B^\alpha_{p,q,\mathrm{loc}}(\mathbb{R}^d) \subset H^s_{\mathrm{loc}}(\mathbb{R}^d)$ for any $s < \alpha$.
\item[(b)] If $p\geq 2$ and $q = 2$, $B^\alpha_{p,q,\mathrm{loc}}(\mathbb{R}^d) \subset H^s_{\mathrm{loc}}(\mathbb{R}^d)$ for any $s \leq \alpha$,
where $B^\alpha_{p,q,\mathrm{loc}}(\mathbb{R}^d)$ is as per Definition \ref{def:besov_loc}.
\end{itemize}
\end{theorem}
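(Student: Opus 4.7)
My plan is to chain two simpler embeddings, namely
\[
B^{\alpha}_{p,q,\mathrm{loc}}(\mathbb{R}^d) \hookrightarrow B^{\alpha}_{2,q,\mathrm{loc}}(\mathbb{R}^d) \hookrightarrow B^{s}_{2,2,\mathrm{loc}}(\mathbb{R}^d) = H^{s}_{\mathrm{loc}}(\mathbb{R}^d),
\]
with the admissible $s$ dictated by the relation between $q$ and $2$. The last equality is the well-known Fourier-side identification of Sobolev spaces with $B^{s}_{2,2}$, which (as recorded in Remark \ref{Rmk: Fourier Plancherel}) localizes without trouble, so the whole content reduces to the two arrows.

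For the first arrow I would work directly from Definition \ref{def:besov_loc}. Setting $G_n(x):=\sup_{\psi\in\mathcal{B}^r}|f(\psi^{2^{-n}}_x)|/2^{-n\alpha}$, the only difference between the two Besov norms is the outer $L^p(K)$ versus $L^2(K)$. Since any compact $K\subset\mathbb{R}^d$ has finite Lebesgue measure and $p\geq 2$, H\"older's inequality gives $\|G_n\|_{L^2(K)}\leq |K|^{1/2-1/p}\|G_n\|_{L^p(K)}$; taking the $\ell^q$-norm in $n$ on both sides yields the embedding with constant depending only on $K$ and $p$.

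For the second arrow I would switch to the Littlewood--Paley characterization of Definition \ref{Def: Besov spaces}. Fix $\varphi\in\mathcal{D}(\mathbb{R}^d)$; multiplication by $\varphi$ maps $B^{\alpha}_{2,q,\mathrm{loc}}$ continuously into the global space $B^{\alpha}_{2,q}(\mathbb{R}^d)$ of compactly supported distributions, so $a_j:=2^{j\alpha}\|\psi_j(D)(\varphi f)\|_{L^2(\mathbb{R}^d)}$ lies in $\ell^q(\mathbb{N})$. Proving $\varphi f\in H^s$ amounts to showing that $b_j:=2^{j(s-\alpha)}a_j$ lies in $\ell^2(\mathbb{N})$. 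In case (b), with $q=2$ and $s\leq\alpha$, the weight $2^{j(s-\alpha)}\leq 1$ is bounded and $\|b_j\|_{\ell^2}\leq\|a_j\|_{\ell^2}<\infty$. In case (a), with $q>2$ and $s<\alpha$, the discrete H\"older inequality with $\frac{1}{q}+\frac{1}{r}=\frac{1}{2}$ (and $r<\infty$) yields $\|b_j\|_{\ell^2}\leq\|2^{j(s-\alpha)}\|_{\ell^r}\|a_j\|_{\ell^q}$, and the geometric series $\sum_j 2^{jr(s-\alpha)}$ converges precisely because $s<\alpha$, with divergence on the boundary $s=\alpha$ explaining why the strict inequality is needed in (a).

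The only genuinely technical point I foresee is justifying that multiplication by $\varphi\in\mathcal{D}(\mathbb{R}^d)$ really carries $B^{\alpha}_{2,q,\mathrm{loc}}$ into the global $B^{\alpha}_{2,q}$; the issue is that $\psi_j(D)(\varphi f)$ is not compactly supported even though $\varphi f$ is. I would handle this by splitting the $L^2(\mathbb{R}^d)$-norm of each Littlewood--Paley block into a piece on a fixed enlargement of $\mathrm{supp}(\varphi)$, controlled by the local definition, and a tail absorbed using the Schwartz decay of the convolution kernels $K_j=\mathcal{F}^{-1}\psi_j$, which produces a rapidly decreasing in $j$ remainder. This bookkeeping is the main technical step; the rest of the argument is just H\"older in $L^p$ and H\"older in $\ell^q$.
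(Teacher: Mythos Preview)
Your proposal is correct and rests on the same two ingredients as the paper: the inclusion $L^p(K)\subset L^2(K)$ for compact $K$ and $p\geq 2$, followed by H\"older's inequality in $\ell^q$ to pass from $\ell^q$ to $\ell^2$ against the geometric weight $2^{j(s-\alpha)}$. The paper's argument, however, is shorter because it never leaves the local ``scaled test function'' picture. Using that $H^s\simeq B^s_{2,2}$ together with the characterization of Theorem~\ref{th:equivalent_norms} (one fixed $\varphi$ with $\int\varphi\neq 0$), the paper bounds $\|f\|_{B^s_{2,2}(K)}^2=\sum_n 2^{2ns}\|f(\varphi^{2^{-n}}_x)\|_{L^2(K)}^2$ directly by $\|f\|_{B^\alpha_{p,q}(K)}^2\,\|2^{2n(s-\alpha)}\|_{\ell^{(q/2)'}}$ in a single chain of inequalities. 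Because both source and target norms are expressed via the same local characterization, there is no need to switch to Littlewood--Paley blocks and hence no need to prove that multiplication by $\varphi\in\mathcal{D}$ sends the local space into the global one; your ``only genuinely technical point'' simply disappears. Your detour through Definition~\ref{Def: Besov spaces} works, but it is avoidable.
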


\begin{proof}
We prove the two items separately
\begin{itemize}
\item[(a)] We show that, if $f \in B^\alpha_{p,q,loc}(\mathbb{R})$, then  $f \in H^{s,loc}(\mathbb{R}^d)$ for any $s < \alpha$. Since $H^s(\mathbb{R}^d)\simeq B^s_{2,2}(\mathbb{R}^d)$, see \cite[Sec. 2.7]{BCD11}, in view of Theorem \ref{th:equivalent_norms}, the sought statement descends if, denoting by $K \subset \mathbb{R}^d$ an arbitrary, but fixed compact set, we exhibit $\varphi \in \mathcal{D}(\mathbb{R}^d)$ with $\int\limits_{\mathbb{R}^d}dx\, \varphi(x) \neq 0$ such that $\|f\|_{B^s_{2,2}(K)}<\infty$. Here $\|\cdot\|_{B^s_{2,2}(K)}$ is a shortcut to indicate the norm as per Equation \eqref{Eq: Local Besov norms}.
Still on account of Theorem \ref{th:equivalent_norms}, we know that since $f \in B^\alpha_{p,q,\mathrm{loc}}(\mathbb{R}^d)$ there exists $\varphi^\prime \in \mathcal{D}(\mathbb{R}^d)$ with $\int\limits_{\mathbb{R}^d}dx\, \varphi^\prime(x) \neq 0$ such that $\|f\|_{B^\alpha_{p,q}(K)}<\infty$. 
Making the ansatz that $\varphi^\prime$ plays the r\^{o}le of $\varphi$,
\begin{align}
&\|f\|^2_{B_{2,2}^s(K)}=\sum_{n \geq 0} 2^{2ns} \|f(\varphi^{2^{-n}}_x)\|^2_{L^2(K)} \lesssim_K \sum_{n \geq 0} 2^{2ns} \|f(\varphi^{2^{-n}}_x)\|^2_{L^p(K)} =\nonumber \\
&=\sum_{n \geq 0} \underbrace{2^{2n(s-\alpha)}}_{\ell^{(\frac{q}{2})^\prime}(\mathbb{N})} \underbrace{2^{2n\alpha} \|f(\varphi^{2^{-n}}_x)\|^2_{L^p(K)}}_{\ell^{\frac{q}{2}}(\mathbb{N})}\lesssim_K \|2^{2n(s-\alpha)}\|_{\ell^{(\frac{q}{2})^\prime}(\mathbb{N})} \|2^{2n\alpha} \|f(\varphi^{2^{-n}}_x)\|^2_{L^p(K)}\|_{\ell^{\frac{q}{2}}(\mathbb{N})} =\nonumber \\
&= \|f\|^2_{B^\alpha_{p,q}(K)} \|2^{2n(s-\alpha)}\|_{\ell^{(\frac{q}{2})^\prime}(\mathbb{N})} < +\infty\,. \nonumber
\end{align}
In the first bound, we applied the hypothesis $p\geq 2$ in order to exploit the continuous inclusions $L^p(\Omega) \subset L^2(\Omega)$ for any $p\geq 2$ where $\Omega\subset \mathbb{R}^d$ is a bounded domain. 
In the second bound, we used H\"{o}lder inequality for sequences thanks to the condition $q > 2$, which allows to consider sequences belonging to $\ell^r(\mathbb{N})$ with $r > 1$.
 As a consequence, setting $(\frac{q}{2})^\prime$ to be such that $\frac{1}{(\frac{q}{2})}+\frac{1}{(\frac{q}{2})^\prime}=1$, we conclude that $\|2^{2n(s-\alpha)}\|_{\ell^{(\frac{q}{2})^\prime}(\mathbb{N})}$ is finite for any $s < \alpha$.
\item[(b)] If $p \geq 2$ and $q = 2$, the thesis descends following the same steps as in point (a), the only notable difference being the presence of the $\ell^\infty$-norm of $2^{2n(s-\alpha)}$, which is finite.
\end{itemize}
\end{proof}

\begin{theorem}
\label{th:inclusions_positive_regularity}
Let $\alpha \geq 0$. Then 
\begin{itemize}
\item[(a')]If $p\geq 2$ and $q > 2$, $B^\alpha_{p,q,\mathrm{loc}}(\mathbb{R}^d) \subset H^s_{\mathrm{loc}}(\mathbb{R}^d)$ for any $s < \alpha$.
\item[(b')] If $p\geq 2$ and $q = 2$, $B^\alpha_{p,q,\mathrm{loc}}(\mathbb{R}^d) \subset H^s_{\mathrm{loc}}(\mathbb{R}^d)$ for any $s \leq \alpha$,
\end{itemize}
where $B^\alpha_{p,q,\mathrm{loc}}(\mathbb{R}^d)$ is as per Definition \ref{def:besov_loc}.
\end{theorem}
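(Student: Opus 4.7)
The plan is to mirror the proof of Theorem \ref{th:inclusions_negative_regularity} almost verbatim, the only new features being an extra non-scaled contribution that appears in Definition \ref{def:besov_loc} when the regularity is non-negative, and a subtle mismatch between the classes of test functions used in characterizing $B^\alpha_{p,q,\mathrm{loc}}$ and $B^s_{2,2,\mathrm{loc}}$. Fix an arbitrary compact $K\subset\mathbb{R}^d$: since $H^s_{\mathrm{loc}}(\mathbb{R}^d)\simeq B^s_{2,2,\mathrm{loc}}(\mathbb{R}^d)$, it suffices to estimate the local $B^s_{2,2}$-seminorm of $f$ on $K$ in terms of its $B^\alpha_{p,q,\mathrm{loc}}$-seminorm.

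The non-scaled contribution is immediate: the continuous inclusion $L^p(K)\subset L^2(K)$, available since $p\geq 2$ and $K$ is bounded, yields
\begin{equation*}
\|\sup_{\psi\in\mathcal{B}^r}|f(\psi_x)|\|_{L^2(K)}\lesssim_{K}\|\sup_{\psi\in\mathcal{B}^r}|f(\psi_x)|\|_{L^p(K)}<\infty,
\end{equation*}
the last bound being part of the hypothesis $f\in B^\alpha_{p,q,\mathrm{loc}}(\mathbb{R}^d)$.

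The scaled contribution is more delicate: the $B^s_{2,2}(K)$-seminorm involves test functions in $\mathcal{B}^r_{\lfloor s\rfloor}$, while the $B^\alpha_{p,q}(K)$-seminorm uses $\mathcal{B}^r_{\lfloor\alpha\rfloor}\subset\mathcal{B}^r_{\lfloor s\rfloor}$ (since $s\leq\alpha$), so the naive pointwise comparison of the two suprema runs in the wrong direction. This is the main technical hurdle, and it is overcome by invoking the standard fact that the characterization of Definition \ref{def:besov_loc} is stable under replacement of $\mathcal{B}^r_{\lfloor s\rfloor}$ by $\mathcal{B}^r_{N}$ for any integer $N\geq\lfloor s\rfloor$, yielding an equivalent seminorm; see \textit{e.g.} \cite[Sec. 2.6]{Tri06}. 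Thus I may compute $\|f\|_{B^s_{2,2}(K)}$ using the smaller class $\mathcal{B}^r_{\lfloor\alpha\rfloor}$, aligning it with the test-function class employed by the $B^\alpha_{p,q}$-seminorm.

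With this alignment the remainder is a line-by-line transcription of the argument of Theorem \ref{th:inclusions_negative_regularity}. Using once more $L^p(K)\subset L^2(K)$ and splitting
\begin{equation*}
2^{2ns}\|\sup_{\psi\in\mathcal{B}^r_{\lfloor\alpha\rfloor}}|f(\psi_x^{2^{-n}})|\|_{L^2(K)}^{2}\lesssim_K 2^{2n(s-\alpha)}\cdot 2^{2n\alpha}\|\sup_{\psi\in\mathcal{B}^r_{\lfloor\alpha\rfloor}}|f(\psi_x^{2^{-n}})|\|_{L^p(K)}^{2},
\end{equation*}
in case (a') I apply H\"older's inequality on sequences with conjugate exponents $q/2$ and $(q/2)'$: the factor $\|2^{2n(s-\alpha)}\|_{\ell^{(q/2)'}(\mathbb{N})}$ is finite precisely because $s<\alpha$, while the remaining factor is controlled by $\|f\|_{B^\alpha_{p,q}(K)}^{2}$. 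In case (b'), $q=2$, the same step needs only the finite $\ell^\infty$-norm of $2^{2n(s-\alpha)}$, which is available exactly for $s\leq\alpha$. The only genuinely new ingredient beyond the negative-regularity proof is therefore the equivalent characterization of Besov spaces via test functions with a larger number of vanishing moments; everything else is a direct adaptation.
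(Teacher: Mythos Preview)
Your argument follows the paper's almost step for step: the non-scaled term is dispatched identically via $L^p(K)\subset L^2(K)$, and the scaled term via the same H\"older splitting $2^{2ns}=2^{2n(s-\alpha)}\cdot 2^{2n\alpha}$ with exponents $q/2$ and $(q/2)'$ (resp.\ $\ell^\infty$ when $q=2$).

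The one genuine point of divergence is the alignment of the test-function classes. The paper passes directly from $\sup_{\psi\in\mathcal{B}^r_{\lfloor s\rfloor}}$ to $\sup_{\psi\in\mathcal{B}^r_{\lfloor\alpha\rfloor}}$ by asserting $\mathcal{B}^r_{\lfloor s\rfloor}\subset\mathcal{B}^r_{\lfloor\alpha\rfloor}$; but as you correctly observe, the inclusion actually runs the other way (for $\lfloor s\rfloor\leq\lfloor\alpha\rfloor$ the class $\mathcal{B}^r_{\lfloor\alpha\rfloor}$ carries more vanishing-moment constraints and is therefore smaller), so the naive pointwise comparison of suprema is in the wrong direction. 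Your remedy---invoking the equivalence of the Besov characterizations when $\mathcal{B}^r_{\lfloor s\rfloor}$ is replaced by $\mathcal{B}^r_{N}$ for any $N\geq\lfloor s\rfloor$---is the standard and correct way to close this gap, and in this respect your write-up is more careful than the paper's own proof.
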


\begin{proof}
		
We prove the two items separately and we employ the same notation and nomenclature as in Definition \ref{def:besov}.
\begin{itemize}
\item[(a')] Let $f \in B^\alpha_{p,q,\mathrm{loc}}(\mathbb{R}^d)$ with $\alpha > 0$ and let $r > -s$. Letting $K \subset \mathbb{R}^d$ be an arbitrary but fixed compact set, we show that $\|\sup\limits_{\psi \in \mathcal{B}^r} \lvert f(\psi_x) \rvert\|_{L^2(x\in K)}$ and $\bigg\| \bigg\| \sup\limits_{\psi \in \mathcal{B}^r_{\lfloor s \rfloor}}\bigg \lvert \frac{f(\psi^{2^{-n}}_x)}{2^{-ns}} \bigg\rvert \bigg\|_{L^2(K)} \bigg\|_{\ell^2(\mathbb{N})}$ are bounded for $s < \alpha$. Starting from the former, it holds
\[
\|\sup\limits_{\psi \in \mathcal{B}^r} \lvert f(\psi_x) \rvert\|_{L^2(K)} \lesssim_K \|\sup_{\psi \in \mathcal{B}^r} \lvert f(\psi_x) \rvert\|_{L^p(K)} \leq \|f\|_{B^\alpha_{p,q}(K)} < +\infty,
\] 
where, in the inequality $\lesssim_K$, we used that $L^p(K) \subset L^2(K)$ for any $p\geq 2$. Focusing instead on the latter, it descends
\begin{align*}
&\sum_{n \geq 0} 2^{2ns} \| \sup\limits_{\psi \in \mathcal{B}^r_{\lfloor s \rfloor}}\lvert f(\psi^{2^{-n}}_x) \rvert \|^2_{L^2(K)} \overset{\mathcal{B}^r_{\lfloor s \rfloor} \subset \mathcal{B}^r_{\lfloor \alpha \rfloor}}{\leq} \sum_{n \geq 0} 2^{2ns} \| \sup\limits_{\psi \in \mathcal{B}^r_{\lfloor \alpha \rfloor}}\lvert f(\psi^{2^{-n}}_x) \rvert \|^2_{L^2(K)}\lesssim_K \nonumber \\ 
&\lesssim_K \sum_{n \geq 0} 2^{2ns} \| \sup\limits_{\psi \in \mathcal{B}^r_{\lfloor \alpha \rfloor}}\lvert f(\psi^{2^{-n}}_x) \rvert \|^2_{L^p(K)} 
= \sum_{n \geq 0} \underbrace{2^{2n(s-\alpha)}}_{\ell^{(\frac{q}{2})^\prime}(\mathbb{N})} \underbrace{2^{2n\alpha} \| \sup\limits_{\psi \in \mathcal{B}^r_{\lfloor \alpha \rfloor}}\lvert f(\psi^{2^{-n}}_x) \rvert \|^2_{L^p(K)}}_{\ell^{\frac{q}{2}}(\mathbb{N})}\lesssim_K \nonumber \\
&\lesssim_K \|2^{2n(s-\alpha)}\|_{\ell^{(\frac{q}{2})^\prime}(\mathbb{N})} \|\|\sup\limits_{\psi \in \mathcal{B}^r_{\lfloor \alpha \rfloor}}\lvert f(\psi^{2^{-n}}_x) \rvert \|^2_{L^p(K)}\|_{\ell^{\frac{q}{2}}} 
\leq \|f\|^2_{B^\alpha_{p,q}(K)} \|2^{2n(s-\alpha)}\|_{\ell^{(\frac{q}{2})^\prime}(\mathbb{N})} \overset{s <\alpha}{<} \infty\,,
\end{align*}
where once more $(\frac{q}{2})^\prime$ is such that $\frac{1}{(\frac{q}{2})}+\frac{1}{(\frac{q}{2})^\prime}=1$ and where we have used that $L^p(K) \subset L^2(K)$ for any $p\geq 2$ and the H\"{o}lder inequality. Observe that with slight modifications in the previous argument one can prove the same statement in the case $\alpha=0$, setting where necessary $\lfloor 0\rfloor=-1$ and $\mathcal{B}^r_{-1} = \mathcal{B}^r$ \cite{HL17}. \\

\item[(b')] The statement follows tracking the same steps as in item (a'), with the notable exception that we need to evaluate the $\ell^\infty$-norm of $2^{2n(s-\alpha)}$.
\end{itemize}
\end{proof}

\subsection{Sobolev Wave Front Set}\label{Sec: Sobolev WF}
In this section we report succinctly the definition and the main properties of the \textit{Sobolev wave-front set} \cite{DH72, JS02, Hor85, Hor97}. 
Its main merit is to codify the directions in Fourier space where, at a given point of its support, a distribution does not have a suitable fractional Sobolev regularity. 
Together with Theorems \ref{th:inclusions_negative_regularity} and \ref{th:inclusions_positive_regularity}, this will be the main tool for our microlocal version of Young theorem. 
Throughout this section our main reference is \cite[App. B]{JS02}, whereas, for the notion of smooth wave front set we refer to \cite[Ch. 8]{Hor03}.

\begin{definition}\label{Def: Sobolev WF}
Let $X$ be an open subset of $\mathbb{R}^d$ and let $u\in\mathcal{D}^\prime(X)$. 
We say that $(x_0,\xi_0)\in X\times\mathbb{R}^d\setminus\{0\}$ is not in $\mathrm{WF}^s(u)$, the $H^s$-wavefront set of $u$, $s\in\mathbb{R}$, if $\exists\varphi\in\mathcal{D}(X)$ with $\varphi(x_0)\neq0$ and an open conic neighborhood $\Gamma$ of $\xi_0$ such that
\begin{equation}
\int_\Gamma\langle\xi\rangle^{2s}\lvert\widehat{\varphi u}(\xi)\rvert^2\,d\xi<\infty\,.
\end{equation}
\end{definition}

\begin{remark}\label{Rmk: Sobolev regularity and Sobolev WF}
We observe that, on account of Definition \ref{Def: Sobolev WF}, $\mathrm{WF}^s(u) = \emptyset$ if and only if $u \in H^s_{\mathrm{loc}}(X)$. 
Thus, recalling Theorems \ref{th:inclusions_negative_regularity} and \ref{th:inclusions_positive_regularity}, if $u\in \mathcal{C}^\alpha_{c}(\mathbb{R}^d)$ with $\alpha\in\mathbb{R}$, $WF^s(u)=\emptyset$ for any $s<\alpha$.
\end{remark}

In the spirit of microlocal analysis, the Sobolev wave front set yields sufficient conditions to give meaning to certain, otherwise ill-defined structures, for those distributions with a suitable microlocal behaviour. 
In particular, we shall report here the results on the tensor product, on the pointwise product and on the pull-back of distributions.  

\begin{proposition}[\cite{JS02}, Prop B.5]
Let $X,Y\subseteq\mathbb{R}^d$ be open sets and let $u\in\mathcal{D}^\prime(X)$, while $v\in\mathcal{D}^\prime(Y)$. 
Then $\exists u\otimes v\in \mathcal{D}^\prime(X \times Y)$ and
\begin{align} 
\mathrm{WF}^r(u \otimes v)\subset&\mathrm{WF}^s(u)\times\mathrm{WF}(v)\cup\mathrm{WF}(u)\times\mathrm{WF}^t(v) \nonumber \\
\cup&
\begin{cases}
\big((\operatorname{supp}(u)\times\{0\})\times\mathrm{WF}(v)\big)\cup\big(\mathrm{WF}(u)\times(\operatorname{supp}(v)\times\{0\})\big)\,,\quad&r=s+t\,,\\
\big((\operatorname{supp}(u)\times\{0\})\times\mathrm{WF}^t(v)\big)\cup\big(\mathrm{WF}^s(u)\times(\operatorname{supp}(v)\times\{0\})\big)\,,\quad &r=\min\{s,t,s+t\}\,,
\end{cases}
\end{align}
where $\mathrm{WF}$ is the $C^\infty$-wave front set.
\end{proposition}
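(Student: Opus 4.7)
The plan is to reduce the claim to a direct Fourier-space estimate after localizing by product cutoffs. Existence of $u\otimes v\in \mathcal{D}'(X\times Y)$ is the standard construction via Schwartz's kernel theorem (or equivalently by $(u\otimes v)(\varphi)\doteq u_x(v_y(\varphi(x,y)))$ with sequential continuity from the uniform boundedness principle), so the substantive content is the wavefront inclusion. Fix a point $(x_0,y_0;\xi_0,\eta_0)$ not in the right-hand side and choose cutoffs $\varphi\in\mathcal{D}(X)$, $\psi\in\mathcal{D}(Y)$ with $\varphi(x_0)\psi(y_0)\neq 0$. The computation rests on the elementary factorization
\[
\mathcal{F}\bigl((\varphi\otimes\psi)(u\otimes v)\bigr)(\xi,\eta)=\widehat{\varphi u}(\xi)\,\widehat{\psi v}(\eta),
\]
so the goal reduces to exhibiting a conic neighborhood $\Gamma\subset\mathbb{R}^{2d}$ of $(\xi_0,\eta_0)$ with
\[
\int_{\Gamma}\langle(\xi,\eta)\rangle^{2r}\,|\widehat{\varphi u}(\xi)|^2\,|\widehat{\psi v}(\eta)|^2\,d\xi\,d\eta<\infty.
\]

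\textbf{Main case: $\xi_0\neq 0$ and $\eta_0\neq 0$.} I would take $\Gamma$ a narrow conic neighborhood of $(\xi_0,\eta_0)$ contained in a product $\Gamma_1\times\Gamma_2$ of cones around $\xi_0,\eta_0$, tight enough that on $\Gamma$ one has the asymptotic equivalence $\langle\xi\rangle\sim\langle\eta\rangle\sim\langle(\xi,\eta)\rangle$. The non-trivial sub-case is $(x_0,\xi_0)\notin\mathrm{WF}^s(u)$ together with $(y_0,\eta_0)\notin\mathrm{WF}^t(v)$; by shrinking the cones and cutoffs one arranges $\int_{\Gamma_1}\langle\xi\rangle^{2s}|\widehat{\varphi u}|^2<\infty$ and $\int_{\Gamma_2}\langle\eta\rangle^{2t}|\widehat{\psi v}|^2<\infty$, and Fubini together with $\langle(\xi,\eta)\rangle^{2(s+t)}\sim\langle\xi\rangle^{2s}\langle\eta\rangle^{2t}$ on $\Gamma$ delivers the bound with $r=s+t$. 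In the remaining sub-cases at least one factor is microlocally smooth, so its Fourier transform decays arbitrarily fast on its own cone; the resulting excess of decay absorbs any weight loss and yields the weaker $r=\min\{s,t,s+t\}$ bound.

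\textbf{Boundary cases: $\eta_0=0$ (symmetrically $\xi_0=0$).} Here a conic neighborhood of $(\xi_0,0)$ cannot be taken of product form. If $y_0\notin\operatorname{supp}(v)$ there is nothing to prove (choose $\psi$ supported away from $\operatorname{supp}(v)$); if $y_0\in\operatorname{supp}(v)$, the avoidance conditions coming from the $(\operatorname{supp}(u)\times\{0\})$ and $(\operatorname{supp}(v)\times\{0\})$ terms force $(x_0,\xi_0)$ to lie outside $\mathrm{WF}(u)$ for the $r=s+t$ case (respectively outside $\mathrm{WF}^s(u)$ in the $\min$ case). I would then choose $\Gamma$ tight around the direction $(\xi_0,0)/|\xi_0|$, so that $|\eta|\lesssim\delta|\xi|$ on $\Gamma$ and $\langle(\xi,\eta)\rangle\sim\langle\xi\rangle$. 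The $\eta$-integral of $|\widehat{\psi v}(\eta)|^2$ is controlled by the polynomial bound from the order of $v$ (giving a power $\langle\xi\rangle^{N}$), and this loss is absorbed either by the rapid decay of $\widehat{\varphi u}$ in the $\xi_0$-cone (when $u$ is microlocally smooth), or by shrinking $\delta$ and invoking the Sobolev decay of $\widehat{\varphi u}$.

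\textbf{Main obstacle.} The delicate point is precisely this boundary-case bookkeeping: the geometry of conic neighborhoods at degenerate directions mixes the scales of $\xi$ and $\eta$, and one must balance the tightness $\delta$ of the cone against the polynomial growth of $\widehat{\psi v}$ and the Sobolev decay of $\widehat{\varphi u}$ so that the integrand remains $L^1$ up to the prescribed weight. The dichotomy between $r=s+t$ and $r=\min\{s,t,s+t\}$ in the statement records exactly how fully the microlocal regularities of the two factors can be combined: full pairing yields the additive exponent, while whenever one factor is only known to be Sobolev (and the other merely smooth or distributional), only the minimum is attainable.
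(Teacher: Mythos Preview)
The paper does not prove this proposition at all: it is stated as a citation from \cite{JS02} (Proposition B.5 there) and used as a black box, so there is no in-paper argument to compare your sketch against.

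Assessing your outline on its own merits: the non-degenerate case $\xi_0\neq 0$, $\eta_0\neq 0$ is handled correctly, and so is the boundary case under the $r=s+t$ alternative, since microlocal smoothness of one factor gives rapid decay that absorbs any polynomial loss from the other. The genuine weak point is the boundary case under the $r=\min\{s,t,s+t\}$ alternative. Your proposed remedy, ``shrinking $\delta$ and invoking the Sobolev decay of $\widehat{\varphi u}$'', does not work as stated: shrinking $\delta$ only produces a constant prefactor $\delta^N$ and does nothing to the exponent of the polynomial growth in $|\xi|$ contributed by the inner integral $\int_{|\eta|<\delta|\xi|}|\widehat{\psi v}(\eta)|^2\,d\eta$. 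If $\psi v$ merely lies in some $H^{t_0}$ with $t_0$ very negative, that inner integral can grow like $\langle\xi\rangle^{-2t_0}$, and combined with the sole hypothesis $\int_{\Gamma_1}\langle\xi\rangle^{2s}|\widehat{\varphi u}|^2<\infty$ you obtain at best $r\leq s+t_0$, not $r=\min\{s,t,s+t\}$. Closing this requires a finer decomposition of the $\eta$-region (e.g.\ separating $|\eta|\leq 1$ from $|\eta|>1$ and, on the latter, splitting according to whether the $\eta$-direction lies in $\mathrm{WF}^t(v)$ or not), or following the original argument in Junker--Schrohe; in any case it is not the one-line fix you indicate.
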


\begin{proposition}[\cite{JS02}, Prop B.6 and \cite{Ober86}]\label{Prop: Sobolev product of distributions}
Let $X\subseteq\mathbb{R}^d$ be an open set and let $u,v\in\mathcal{D}^\prime(X)$. 
If for any $(x,\xi)\in X \times\mathbb{R}^d\setminus\{0\}$,  $\exists s_1,s_2\in\mathbb{R}$ with $s_1+s_2\geq0$ such that $(x,\xi)\notin\mathrm{WF}^{s_1}(u)$ and $(x,-\xi)\notin\mathrm{WF}^{s_2}(v)$, the product $u\cdot v\in\mathcal{D}^\prime(X)$ exists.
\end{proposition}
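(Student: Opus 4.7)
The goal is to give meaning to $u\cdot v$ as an element of $\mathcal{D}^\prime(X)$, and the strategy is the standard Hörmander-type scheme adapted from the smooth wave front set to the Sobolev setting. Since existence of the product is a local matter, I would first fix $x_0\in X$ and reduce the problem to constructing $(\varphi u)\cdot (\varphi v)\in\mathcal{S}^\prime(\mathbb{R}^d)$ for a suitable $\varphi\in\mathcal{D}(X)$ with $\varphi(x_0)\neq 0$. Since $\varphi u,\varphi v\in\mathcal{E}^\prime(\mathbb{R}^d)$, their Fourier transforms $\widehat{\varphi u},\widehat{\varphi v}$ are smooth functions of at most polynomial growth, and the natural candidate is
\begin{equation*}
\widehat{(\varphi u)\cdot(\varphi v)}(\eta)\vcentcolon=(2\pi)^{-d}\int_{\mathbb{R}^d}\widehat{\varphi u}(\eta-\xi)\,\widehat{\varphi v}(\xi)\,d\xi\,,
\end{equation*}
so the core task is to prove that this convolution is well-defined and polynomially bounded in $\eta$.

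The second step exploits compactness of $S^{d-1}$. For every $\xi_0\in S^{d-1}$ the hypothesis furnishes real numbers $s_1(\xi_0),s_2(\xi_0)$ with $s_1+s_2\geq 0$, a common cutoff (shrinking and multiplying choices of $\varphi$ as needed), and open conic neighbourhoods $\Gamma_1\ni\xi_0$ and $\Gamma_2\ni-\xi_0$ such that
\begin{equation*}
\int_{\Gamma_1}\langle\xi\rangle^{2s_1}|\widehat{\varphi u}(\xi)|^2\,d\xi<\infty\,,\qquad\int_{\Gamma_2}\langle\eta\rangle^{2s_2}|\widehat{\varphi v}(\eta)|^2\,d\eta<\infty\,.
\end{equation*}
By compactness a finite collection of such directions $\{\xi_0^{(k)}\}_{k=1}^N$ produces open cones whose union covers $\mathbb{R}^d\setminus\{0\}$, together with a subordinate smooth conic partition of unity $\{\chi_k\}$ with $\chi_k$ supported in $\Gamma_1^{(k)}\cap(-\Gamma_2^{(k)})$ (shrinking the cones symmetrically using $\xi\mapsto -\xi$).

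The third and pivotal step is the estimate. Writing $\widehat{\varphi u}*\widehat{\varphi v}(\eta)=\sum_k\int\widehat{\varphi u}(\eta-\xi)\chi_k(\xi-\tfrac{\eta}{2})\widehat{\varphi v}(\xi)\,d\xi$ (after an auxiliary splitting that separates the regions where $\xi$ is small, $\eta-\xi$ is small, and both are large and aligned with a cone), I would apply Cauchy--Schwarz in each piece to get terms of the shape
\begin{equation*}
\left(\int\langle\xi\rangle^{-2s_2^{(k)}}|\widehat{\varphi u}(\eta-\xi)|^2\,d\xi\right)^{1/2}\left(\int\langle\xi\rangle^{2s_2^{(k)}}|\widehat{\varphi v}(\xi)|^2\,d\xi\right)^{1/2}\,,
\end{equation*}
and use $s_1^{(k)}+s_2^{(k)}\geq 0$ precisely to bound $\langle\xi\rangle^{-2s_2^{(k)}}\lesssim \langle\eta-\xi\rangle^{2s_1^{(k)}}\langle\eta\rangle^{2\max\{0,-s_1^{(k)}-s_2^{(k)}\}}$ on the support of the cone cutoff, where $|\xi|$ and $|\eta-\xi|$ are comparable. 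The $v$-factor is finite by the Sobolev WF bound, the $u$-factor by the analogous bound after a change of variables, so each piece is dominated by a finite power of $\langle\eta\rangle$, as required. The remaining ``small frequency'' regions are trivially controlled by the polynomial growth of $\widehat{\varphi u}$ and $\widehat{\varphi v}$ on compact sets.

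The final step is independence of the localisation: varying $\varphi$ among bump functions supported near $x_0$ and comparing two choices $\varphi_1,\varphi_2$ shows that the distributions agree on $\{\varphi_1\varphi_2\neq 0\}$, so a partition of unity argument glues them into $u\cdot v\in\mathcal{D}^\prime(X)$. \textbf{The main obstacle} is the Cauchy--Schwarz bookkeeping in step three: one must arrange the conic partition of unity and the pairing of the exponents $s_1^{(k)},s_2^{(k)}$ so that the constraint $s_1^{(k)}+s_2^{(k)}\geq 0$ is exactly what is needed to close the estimate with polynomial growth in $\eta$, and this must be done uniformly over the finite cover without losing compactness of the support of the cutoffs.
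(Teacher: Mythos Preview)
The paper does not supply its own proof of this proposition: it is quoted verbatim as a result from \cite[Prop.~B.6]{JS02} and \cite{Ober86}, with no argument given. Your sketch follows precisely the classical scheme used in those references (localise, cover $S^{d-1}$ by finitely many cones using compactness, split the convolution integral accordingly, and close each piece with Cauchy--Schwarz using $s_1+s_2\geq 0$), so there is nothing to compare against in the paper itself.

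One small technical remark on your step three: the inequality you write, $\langle\xi\rangle^{-2s_2^{(k)}}\lesssim\langle\eta-\xi\rangle^{2s_1^{(k)}}\langle\eta\rangle^{2\max\{0,-s_1^{(k)}-s_2^{(k)}\}}$, is not literally true globally; it holds only on the region where $|\xi|$ and $|\eta-\xi|$ are comparable (which is indeed what your conic cutoff is designed to enforce). In the standard references the splitting is usually organised a bit differently---one decomposes according to whether $|\xi|\leq |\eta|/2$, $|\eta-\xi|\leq |\eta|/2$, or both are $\geq |\eta|/2$, rather than via $\chi_k(\xi-\eta/2)$---but this is a cosmetic difference and your bookkeeping can be made to work. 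The overall plan is sound.
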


\begin{remark}\label{Rmk: pull-back definition}
Since we shall make use of the notion of pull-back of a distribution we recall its definition, \cite[Thm. 6.1.2]{Hor03}: Let $U,V\subseteq\mathbb{R}^d$ be open sets and let $f:U\to V$ be a smooth map such that $df_x$ is surjective for any $x\in U$, with $df_x$ denoting the differential of $f$ at $x$. The pull-back of a distribution via $f$ is the unique continuous and linear map $f^*:\mathcal{D}^\prime(V)\to\mathcal{D}^\prime(U)$ such that $f^*u=u\circ f$ when $u\in C^0(V)$ is a continuous function.
\end{remark}

To conclude, we consider the restriction of distributions to submanifolds. 
In particular, let $Y$ be a $(d-1)$-dimensional manifold together with a smooth embedding $f:Y\to X\subseteq\mathbb{R}^n$.
We denote with $Y_f=f[Y]$ and with $N_f$ the conormal bundle of $Y$, \textit{i.e.},
\begin{align*}
N_f\vcentcolon=\{(f(y),\xi)\in X\times\mathbb{R}^d\,|\,y\in Y\,,\; df^t_y(\xi)=0\}\,.
\end{align*}
\noindent

\begin{proposition}[\cite{JS02}, Prop B.7]\label{Prop: conormal bundle}
With the above notations, let $u\in\mathcal{D}^\prime(X)$ and $s>1/2$ be such that $\mathrm{WF}^s(u)\cap N_f = \emptyset$. 
Then there exists the restriction $u_{Y_f}\in\mathcal{D}^\prime(f[Y])$ such that
	\begin{align*}
	u_{Y_f}(\varphi)=u\cdot(\varphi\delta_{f[Y]})(1)\,,\qquad\forall\,\varphi\in\mathcal{D}(f[Y])\,,
\end{align*}
where $\cdot$ denotes the pointwise product and where $\varphi\delta_{f[Y]}:\mathcal{E}(X)\to\mathbb{C}$ is the distribution such that $(\varphi\delta_{f[Y]})(\psi)=\int_{f[Y]}\varphi(y)\psi(y)\,dy$ where $dy$ is the pull-back to $f[Y]$ of the Lebesgue measure on $\mathbb{R}^d$. In addition we can identify $u_Y\in\mathcal{D}^\prime(Y)$ such that $u_Y:=f^*u_{Y_f}$ and
\begin{align*}
\mathrm{WF}^{s-1/2}(u_Y)\subseteq f^\ast\mathrm{WF}^s(u) \vcentcolon = \{(y,df^t_y(\xi))\,|\, (f(y),\xi)\in\mathrm{WF}^s(u)\}\,,
\end{align*}
for any $s>1/2$.
\end{proposition}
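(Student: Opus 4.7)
The plan is to deduce existence of $u_{Y_f}$ from Proposition \ref{Prop: Sobolev product of distributions} applied to $u$ and the hypersurface delta, and then to extract the Sobolev wave-front estimate for the pull-back by a Cauchy--Schwarz argument in local coordinates, which is exactly the microlocal incarnation of the classical $1/2$-loss in the trace theorem.

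\textbf{Step 1: existence of the restriction.} Working in local charts that straighten $f$ to the inclusion $\{x_d=0\}\hookrightarrow X$, one has $\delta_{f[Y]}=\delta(x_d)\otimes 1$, whose Sobolev wave-front satisfies $\mathrm{WF}^t(\delta_{f[Y]})=N_f$ for $t\geq -1/2$ and $\mathrm{WF}^t(\delta_{f[Y]})=\emptyset$ for $t<-1/2$, since $\delta(x_d)\in H^{-1/2-\varepsilon}_{\mathrm{loc}}(\mathbb R)$ for every $\varepsilon>0$. For each $(x,\xi)\in X\times(\mathbb R^d\setminus\{0\})$ we exhibit $s_1,s_2\in\mathbb R$ with $s_1+s_2\geq 0$, $(x,\xi)\notin\mathrm{WF}^{s_1}(u)$ and $(x,-\xi)\notin\mathrm{WF}^{s_2}(\varphi\delta_{f[Y]})$: off $N_f$ we take $s_2$ very large and $s_1=-s_2$ very negative, exploiting that every distribution lies locally in some $H^{s_1}_{\mathrm{loc}}$; on $N_f$ we take $s_2=-s<-1/2$, so $\mathrm{WF}^{s_2}(\varphi\delta_{f[Y]})=\emptyset$, and $s_1=s$, so $(x,\xi)\notin\mathrm{WF}^{s}(u)$ by hypothesis. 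Proposition \ref{Prop: Sobolev product of distributions} then gives $u\cdot(\varphi\delta_{f[Y]})\in\mathcal D'(X)$. Since its support is contained in $\operatorname{supp}\varphi$, which is compact, the evaluation at $1$ is well defined, and linearity plus sequential continuity in $\varphi$ (inherited from continuity of the product in the Sobolev-wave-front topology) promote $\varphi\mapsto u\cdot(\varphi\delta_{f[Y]})(1)$ to a distribution $u_{Y_f}\in\mathcal D'(f[Y])$.

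\textbf{Step 2: reduction to a slicing identity.} Since the Sobolev wave-front set is invariant under smooth diffeomorphisms, it suffices to prove the inclusion locally after straightening $f$. Let $(y_0,\eta_0)$ lie outside the right-hand side, i.e.\ $(y_0,0;\eta_0,\xi_d)\notin\mathrm{WF}^{s}(u)$ for every $\xi_d\in\mathbb R$; the poles $(y_0,0;0,\pm1)$ are automatically avoided because they belong to $N_f$. Choose cutoffs $\chi\in\mathcal D(X)$ with $\chi\equiv 1$ on a neighborhood of $(y_0,0)$ and $\varphi\in\mathcal D(\mathbb R^{d-1})$ with $\varphi(y_0)\neq 0$ and $\chi(y,0)\equiv 1$ on $\operatorname{supp}\varphi$. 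The Fourier inversion representation $\delta(x_d)=(2\pi)^{-1}\int e^{ix_d\xi_d}d\xi_d$, evaluated in the distributional sense against $\chi\varphi u$, yields the identity
\begin{equation*}
\widehat{\varphi u_Y}(\eta')=\frac{1}{2\pi}\int_{\mathbb R}\widehat{\chi\varphi u}(\eta',\xi_d)\,d\xi_d.
\end{equation*}

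\textbf{Step 3: Cauchy--Schwarz and the $1/2$ loss.} Multiplying and dividing by $\langle(\eta',\xi_d)\rangle^{s}$ and applying Cauchy--Schwarz gives
\begin{equation*}
|\widehat{\varphi u_Y}(\eta')|^{2}\lesssim\Bigl(\int_{\mathbb R}\langle(\eta',\xi_d)\rangle^{-2s}d\xi_d\Bigr)\Bigl(\int_{\mathbb R}\langle(\eta',\xi_d)\rangle^{2s}|\widehat{\chi\varphi u}(\eta',\xi_d)|^{2}d\xi_d\Bigr).
\end{equation*}
Because $s>1/2$, the first factor is $\lesssim\langle\eta'\rangle^{1-2s}$, so integrating over a conic neighborhood $\Gamma'$ of $\eta_0$ in $\mathbb R^{d-1}$ yields
\begin{equation*}
\int_{\Gamma'}\langle\eta'\rangle^{2(s-1/2)}|\widehat{\varphi u_Y}(\eta')|^{2}d\eta'\lesssim\int_{\Gamma'\times\mathbb R}\langle\xi\rangle^{2s}|\widehat{\chi\varphi u}(\xi)|^{2}d\xi.
\end{equation*}

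\textbf{Step 4: finiteness of the right-hand side.} The region $\Gamma'\times\mathbb R\subset\mathbb R^d\setminus\{0\}$ is conic but unbounded in the normal direction; projectivizing it, its closure in the cosphere meets the poles, which are precisely the directions of $N_f$. We split $\Gamma'\times\mathbb R=A\cup B$ with $A=\{|\eta'|\geq\delta|\xi_d|\}$ and $B=\{|\eta'|<\delta|\xi_d|\}$ for some small $\delta>0$. The set $A$ projects to a compact subset of the cosphere bounded away from the poles, so by the hypothesis $(y_0,0;\eta_0,\xi_d)\notin\mathrm{WF}^{s}(u)$ together with compactness and openness of the complement of $\mathrm{WF}^{s}(u)$, a finite cover by conic Sobolev neighborhoods makes $\int_A\langle\xi\rangle^{2s}|\widehat{\chi\varphi u}|^{2}d\xi$ finite. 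The set $B$ is contained in a conic neighborhood of the poles $(0,\pm1)\in N_f$; by the standing assumption $\mathrm{WF}^{s}(u)\cap N_f=\emptyset$, the analogous argument yields finiteness of $\int_B\langle\xi\rangle^{2s}|\widehat{\chi\varphi u}|^{2}d\xi$. Hence $(y_0,\eta_0)\notin\mathrm{WF}^{s-1/2}(u_Y)$, proving the inclusion.

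\textbf{Expected main obstacle.} The genuine technical point is Step 4: because the fibre $\Gamma'\times\mathbb R$ over a direction $\eta'$ of the base is unbounded, a single conic Sobolev neighborhood does not suffice to dominate the slicing integral, and one must separately exploit the conormal hypothesis $\mathrm{WF}^{s}(u)\cap N_f=\emptyset$ to tame the contribution near the poles. This is exactly where the restriction hypothesis $s>1/2$ and the avoidance of the conormal bundle are both essential and interlocked.
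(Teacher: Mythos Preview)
The paper does not supply its own proof of this proposition: it is quoted verbatim as \cite[Prop.~B.7]{JS02} and stated without argument, so there is nothing in the present paper to compare your attempt against. What you have written is essentially the standard proof one finds in the literature (and presumably in \cite{JS02} itself): existence of the product via the Sobolev-wave-front criterion of Proposition~\ref{Prop: Sobolev product of distributions}, followed by the slicing identity and the Cauchy--Schwarz trace estimate that produces the $1/2$ loss.

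Your argument is correct in outline and in most details. A couple of small points worth tightening: in Step~1, when you treat directions off $N_f$ by taking ``$s_2$ very large and $s_1=-s_2$ very negative'', you should make explicit that $u$, being a distribution, lies in some $H^{s_1}_{\mathrm{loc}}$ on any fixed compact set, so the choice of $s_1$ depends on a localization of $u$ and not just on the direction; this is harmless but should be said. In Step~4, the compactness/covering argument for the region $A$ tacitly uses that after possibly shrinking the cutoff $\chi$ and the cone $\Gamma'$, the closure of the directions in $A$ stays inside the complement of $\mathrm{WF}^{s}(u)$ over $\operatorname{supp}\chi$; this is again standard microlocal bookkeeping, but it is the place where a careless reader could object, so a sentence making the shrinking explicit would strengthen the write-up.
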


\begin{remark}
Under the assumptions of Proposition \ref{Prop: conormal bundle}, consider a coordinate system $\{x_i\}_{i=1\dots d}$ on $\mathbb{R}^d$ such that $f[Y]$ coincides with the locus	$x_1=0$. Then $\varphi\delta_{f[Y]}$ can be written as $\varphi(x_2,\ldots,x_d)\delta_0(x_1)$ and it follows
	\begin{align*}
		WF^s(\varphi\delta_{f[Y]}) \subseteq
		\begin{cases}
			\emptyset\,, \quad&\mathrm{if}\;\,s<-\frac{1}{2}\,,  \\
			N_f\,, \quad&\mathrm{if}\;\,s\geq-\frac{1}{2}\,.
		\end{cases}
	\end{align*}
\end{remark}

\begin{remark}\label{Rmk: product as pull-back}
We observe that, under the hypotheses of Proposition \ref{Prop: Sobolev product of distributions}, Proposition \ref{Prop: conormal bundle} entails that the pointwise product $u\cdot v$ can be realized as the pull-back $\Delta^*(u\otimes v)$ where $\Delta:X\ni x\mapsto(x,x)\in X\times X$ is the diagonal map.
\end{remark}
\section{Microlocal Version of Young Theorem}\label{Sec: microlocal improvement}
In this section we discuss a reformulation of Young's theorem using microlocal techniques. As a premise, we briefly recall the original statement of this theorem, which is a direct consequence of Theorem \cite[Thm. 2.82]{BCD11} applied to $B^\alpha_{\infty,\infty,\mathrm{loc}}(\mathbb{R}^d)$, with $\alpha\in\mathbb{R}$ (see also \cite{Hai14, CZ20}). 

\begin{theorem}[Young's Product Theorem]\label{Thm: classical Young}
Let $\alpha,\beta\in\mathbb{R}$ be such that $\alpha+\beta>0$. 
Then the map $(f,g)\mapsto f \cdot g$ extends to a continuous bilinear map from $B^\alpha_{\infty,\infty,\mathrm{loc}}(\mathbb{R}^d)\times B^\beta_{\infty,\infty,\mathrm{loc}}(\mathbb{R}^d)$ to $B^{\alpha\wedge\beta}_{\infty,\infty,\mathrm{loc}}(\mathbb{R}^d)$ where $\alpha\wedge\beta=\min(\alpha,\beta)$. 
\end{theorem}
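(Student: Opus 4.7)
The plan is to prove this via Bony's paraproduct decomposition, which is the standard tool for handling products in Besov spaces. Fix a Littlewood--Paley partition of unity $\{\psi_j\}_{j\in\mathbb{N}_0}$ as in Definition \ref{Def: L-P partition of unity}, write $f_j := \psi_j(D)f$, $g_j := \psi_j(D)g$, and $S_{j}f := \sum_{k\leq j} f_k$. Formally the product decomposes as
\begin{align*}
fg = T_f g + T_g f + R(f,g)\,,
\end{align*}
where $T_f g := \sum_{j\geq 1} S_{j-1}f \cdot g_j$, $T_g f := \sum_{j\geq 1} S_{j-1}g \cdot f_j$ are the two paraproducts and $R(f,g) := \sum_{|j-k|\leq 1} f_j g_k$ is the resonant remainder. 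The crucial structural feature is that each $S_{j-1}f\cdot g_j$ has Fourier support in an annulus $\{|\xi|\sim 2^j\}$, whereas $f_j g_k$ with $|j-k|\leq 1$ only has Fourier support in a ball $\{|\xi|\lesssim 2^j\}$; this difference governs how convergence is extracted.

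To reduce to the global setting I would first localize: given a compact $K$, multiply by cutoffs $\chi\in\mathcal{D}(\mathbb{R}^d)$ equal to $1$ near $K$, and observe that multiplication by a test function maps $B^\alpha_{\infty,\infty,\mathrm{loc}}$ into $B^\alpha_{\infty,\infty}$ continuously. I can then work with $f,g$ globally in $B^\alpha_{\infty,\infty}(\mathbb{R}^d)$ and $B^\beta_{\infty,\infty}(\mathbb{R}^d)$ respectively. For the two paraproducts I would use the standard Bernstein-type estimate $\|S_{j-1}f\|_{L^\infty}\lesssim 2^{-j(\alpha\wedge 0)}\|f\|_{\alpha}$ together with $\|g_j\|_{L^\infty}\lesssim 2^{-j\beta}\|g\|_{\beta}$; since $S_{j-1}f\cdot g_j$ is spectrally supported in an annulus of size $2^j$, the $j$-th Littlewood--Paley block of $T_f g$ is (up to finite overlap) exactly this term, so
\begin{align*}
\|T_f g\|_{B^{\alpha+\beta-(\alpha\vee 0)}_{\infty,\infty}} \lesssim \|f\|_{\alpha}\|g\|_{\beta}\,,
\end{align*}
and symmetrically for $T_g f$. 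In particular both paraproducts are unconditionally bounded into $B^{\alpha\wedge\beta}_{\infty,\infty}$ under no sign assumption on $\alpha+\beta$.

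The genuine content of Young's condition sits in the resonant term. For $R_j:=\sum_{|k-j|\leq 1} f_j g_k$ one has the crude bound $\|R_j\|_{L^\infty}\lesssim 2^{-j(\alpha+\beta)}\|f\|_{\alpha}\|g\|_{\beta}$, but $R_j$ is only ball-supported in Fourier, so its Littlewood--Paley blocks $\psi_n(D)R_j$ vanish for $n>j+N_0$ for some fixed $N_0$. Summing in $j$ and estimating $\psi_n(D)\bigl(\sum_j R_j\bigr)$ in $L^\infty$ produces a geometric series
\begin{align*}
\|\psi_n(D)R(f,g)\|_{L^\infty}\lesssim \sum_{j\geq n-N_0} 2^{-j(\alpha+\beta)}\|f\|_{\alpha}\|g\|_{\beta}\lesssim 2^{-n(\alpha+\beta)}\|f\|_{\alpha}\|g\|_{\beta}\,,
\end{align*}
where the last inequality uses precisely $\alpha+\beta>0$ to make the geometric series converge. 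This gives $R(f,g)\in B^{\alpha+\beta}_{\infty,\infty}\subset B^{\alpha\wedge\beta}_{\infty,\infty}$ with the desired bilinear estimate.

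The main obstacle, as already indicated, is the resonant term: the two paraproducts are essentially automatic, while $R(f,g)$ is where the hypothesis $\alpha+\beta>0$ is actually used and where failure would break the argument. Once the bilinear estimate is established on the Schwartz class, the continuous extension to all of $B^\alpha_{\infty,\infty,\mathrm{loc}}\times B^\beta_{\infty,\infty,\mathrm{loc}}$ follows by density-plus-localization, and the target regularity $\alpha\wedge\beta$ is obtained by embedding $B^{\alpha+\beta}_{\infty,\infty}$ and the two paraproduct target spaces into the common space $B^{\alpha\wedge\beta}_{\infty,\infty}$.
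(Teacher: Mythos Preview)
The paper does not actually prove this theorem: it is stated as a known result, with the proof deferred to \cite[Thm.~2.82]{BCD11} (and also \cite{Hai14,CZ20}). Your Bony paraproduct argument is precisely the standard proof found in \cite{BCD11}, so in that sense your approach coincides with what the paper invokes.

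One genuine subtlety in your write-up: you close by saying the extension from the Schwartz class to $B^\alpha_{\infty,\infty,\mathrm{loc}}\times B^\beta_{\infty,\infty,\mathrm{loc}}$ follows by ``density-plus-localization''. This does not work as stated, because $\mathcal{S}(\mathbb{R}^d)$ is \emph{not} dense in $B^\alpha_{\infty,\infty}(\mathbb{R}^d)$ --- its closure is the strictly smaller ``little'' Zygmund space. The correct way around this is to observe that the paraproduct and resonant sums are defined directly for arbitrary $f\in B^\alpha_{\infty,\infty}$, $g\in B^\beta_{\infty,\infty}$ (each partial sum is a well-defined tempered distribution and the bilinear estimates you wrote give convergence in $B^{\alpha\wedge\beta}_{\infty,\infty}$), so no density argument is needed. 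With that adjustment your outline is correct.
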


\begin{remark}\label{Rmk: canonicity}
In the following analyses, we will be interested in discussing the existence of a product between suitable classes of distributions. In order to highlight that our construction can be seen as a natural generalization to the case in hand of the classical result proven in \cite[Thm. 8.2.10]{Hor03}, we shall call {\em canonical} the product $f\cdot g$ of two suitably chosen distributions provided that $f\cdot g=\Delta^*(f\otimes g)$ where $\Delta^*$ is the pull-back along the diagonal map $\Delta:x\mapsto(x,x)$, see also Remark \ref{Rmk: pull-back definition}.  In addition this characterization of the product between two distributions in terms of a pull-back is unique, following slavishly the same argument used in \cite[Thm. 8.2.4]{Hor03}, which can be adapted slavishly to the cases in hand.
\end{remark}

In the following, given $u\in\mathcal{D}^\prime(\mathbb{R}^d)$, we shall denote $\mathrm{singsupp}(u)$ the singular support of $u$, \cite{Hor03}. We prove a result for the product of H\"older functions which extends Theorem \ref{Thm: classical Young}.

In addition, Let $g \in \mathcal{C}_{\mathrm{loc}}^\beta(\mathbb{R}^d)$ with $\beta<0$ and let
\begin{equation}\label{Eq: beta*}
\beta^\ast_g \vcentcolon= \sup\{\gamma \leq 0: g \in B_{p,\infty,\mathrm{loc}}^\gamma(\mathbb{R}^d) \text{ for all}\, p\in [2,\infty]\}\,.
\end{equation}

\begin{example}
Let $\delta$ be the Dirac delta centered at the origin. 
Since $\delta \in B^{-d+\frac{d}{p}}_{p,\infty}(\mathbb{R}^d)$ for any $p\in [1,\infty]$, then $\beta^\ast_\delta \geq -\frac{d}{2}$. 
Actually $-d/2$ is the maximum of the set appearing on the right hand side of \eqref{Eq: beta*}
Indeed, suppose that there exist $p> 2$ and $\alpha > -d/2$ such that $\delta \in B^\alpha_{p,\infty}(\mathbb{R}^d)$. 
On account of the embedding theorem for Besov spaces -- \textit{cf.} \cite[Prop. 2.71]{BCD11}, it holds
\[
B^\alpha_{p,\infty}(\mathbb{R}^d) \hookrightarrow B^{\alpha-\frac{d}{p}}_{\infty,\infty}(\mathbb{R}^d)\hookrightarrow B^{-d}_{\infty,\infty}(\mathbb{R}^d)\,,
\]
as a consequence of $\alpha -d/p > -d$. 
Then the Dirac delta distributions lies in a H\"{o}lder space $\mathcal{C}^\gamma(\mathbb{R}^d)$ with a better regularity than $-d$. This is a contradiction and we conclude that $\beta^\ast_\delta = -\frac{d}{2}$.
\end{example}

\begin{example}
Let $g \in \mathcal{C}^\beta(\mathbb{R}^d)$ be $g(x) \vcentcolon=\lvert x\rvert^\beta$ with $\beta < - d/2$. 
It turns out that $g \in B_{2,\infty}^{\beta+\frac{d}{2}}(\mathbb{R}^d)$. 
Then $\beta^\ast_g \geq \beta+\frac{d}{2}$. 
In the same spirit of the previous example, we prove that $\beta + d/2$ is actually a maximum. 
Suppose there exist $p> 2$ and $\alpha > \beta + d/2$ such that $g \in B^\alpha_{p,\infty}(\mathbb{R}^d)$. 
On account of the embedding theorem for Besov spaces -- \textit{cf.} \cite[Prop. 2.71]{BCD11}, it holds
\[
B^\alpha_{p,\infty}(\mathbb{R}^d) \hookrightarrow B^{\alpha-\frac{d}{p}}_{\infty,\infty}(\mathbb{R}^d)\hookrightarrow B^{\beta}_{\infty,\infty}(\mathbb{R}^d)\,,
\]
since $\alpha -d/p > \beta$. 
This $g$ belongs to a H\"{o}lder space $\mathcal{C}^\gamma(\mathbb{R}^d)$ with $\gamma > \beta$. This is not possible because $g \in \mathcal{C}^\beta(\mathbb{R}^d)$ and we conclude that $\beta^\ast_g = \beta + \frac{d}{2}$.
\end{example}

\begin{theorem}\label{Thm: microlocal Young}
Let $f,g\in\mathcal{D}^\prime(\mathbb{R}^d)$. Suppose that, for any $x\in \operatorname{singsupp}(f)\cap\operatorname{singsupp}(g)$, $\exists\,\varphi_f, \varphi_g \in\mathcal{D}(\mathbb{R}^d)$, with $\varphi_f(x)\neq0$ and $\varphi_g(x)\neq0$, such that $\varphi_ff\in \mathcal{C}_{c}^\alpha(\mathbb{R}^d)$ and  $\varphi_gg\in \mathcal{C}_{c}^\beta(\mathbb{R}^d)$ with $\alpha>0$ and $\beta < 0$ such that
\begin{align}\label{Eq: Novel Young condition}
\alpha + \beta^\ast_{\varphi_g g} > 0\,.
\end{align}
Then there exists a canonical product $f \cdot g \in \mathcal{D}^\prime(\mathbb{R}^d)$.
\end{theorem}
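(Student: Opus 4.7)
The strategy is to verify the hypotheses of Proposition~\ref{Prop: Sobolev product of distributions} at every $(x,\xi)\in\mathbb{R}^d\times(\mathbb{R}^d\setminus\{0\})$. Once this is done, the product $f\cdot g\in\mathcal{D}^\prime(\mathbb{R}^d)$ exists, and Remark~\ref{Rmk: product as pull-back} identifies it with $\Delta^\ast(f\otimes g)$, which is exactly the notion of canonical product in use here (Remark~\ref{Rmk: canonicity}). The points $x\notin\operatorname{singsupp}(f)\cap\operatorname{singsupp}(g)$ are handled for free: at least one of $f,g$ is smooth on a neighbourhood of $x$, so the corresponding Sobolev wave-front set is empty at $(x,\cdot)$ for every Sobolev order, and the condition $s_1+s_2\geq 0$ is achieved trivially by letting one exponent be arbitrarily large.

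The substantive work concerns $x\in\operatorname{singsupp}(f)\cap\operatorname{singsupp}(g)$. By hypothesis we fix $\varphi_f,\varphi_g\in\mathcal{D}(\mathbb{R}^d)$, both non-vanishing at $x$, with $\varphi_f f\in\mathcal{C}^\alpha_c(\mathbb{R}^d)\subset B^\alpha_{\infty,\infty,c}(\mathbb{R}^d)$, $\alpha>0$, and $\varphi_g g\in\mathcal{C}^\beta_c(\mathbb{R}^d)$, $\beta<0$, satisfying $\alpha+\beta^\ast_{\varphi_g g}>0$. Theorem~\ref{th:inclusions_positive_regularity}(a') applied with $p=q=\infty$ yields $\varphi_f f\in H^{s_1}_{\mathrm{loc}}(\mathbb{R}^d)$ for every $s_1<\alpha$; since $\varphi_f$ is non-zero on an open neighbourhood of $x$, Remark~\ref{Rmk: Sobolev regularity and Sobolev WF} gives $(x,\xi)\notin\mathrm{WF}^{s_1}(f)$ for every $\xi\neq 0$. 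Dually, the definition of $\beta^\ast_{\varphi_g g}$ in \eqref{Eq: beta*} furnishes, for each $\gamma<\beta^\ast_{\varphi_g g}$, some $p\in[2,\infty]$ with $\varphi_g g\in B^\gamma_{p,\infty,\mathrm{loc}}(\mathbb{R}^d)$; Theorem~\ref{th:inclusions_negative_regularity}(a) is applicable because $q=\infty>2$, and produces $\varphi_g g\in H^{s_2}_{\mathrm{loc}}(\mathbb{R}^d)$ for every $s_2<\gamma$, hence $(x,-\xi)\notin\mathrm{WF}^{s_2}(g)$ for every $s_2<\beta^\ast_{\varphi_g g}$ and every $\xi\neq 0$.

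The quantitative input $\alpha+\beta^\ast_{\varphi_g g}>0$ then lets me pick $s_1<\alpha$ and $s_2<\beta^\ast_{\varphi_g g}$ with $s_1+s_2\geq 0$, closing the pointwise verification of Proposition~\ref{Prop: Sobolev product of distributions}. The main subtlety I foresee is the passage from the Besov regularity of the \emph{localized} distribution $\varphi_g g$ to a Sobolev wave-front statement for $g$ itself near $x$: this step depends on the fact that $\varphi_g$ is non-zero throughout an open neighbourhood of $x$, so multiplication by $\varphi_g$ is locally invertible there and the emptiness of $\mathrm{WF}^{s_2}(\varphi_g g)$ transfers to emptiness of $\mathrm{WF}^{s_2}(g)$ at $(x,\cdot)$; the analogous remark covers the $\varphi_f f$ side. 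Once the hypothesis of Proposition~\ref{Prop: Sobolev product of distributions} is verified globally in this way, the desired $f\cdot g\in\mathcal{D}^\prime(\mathbb{R}^d)$ is obtained, and its canonicity in the sense of Remark~\ref{Rmk: canonicity} follows from the uniqueness of the pull-back characterization recorded there together with Remark~\ref{Rmk: product as pull-back}.
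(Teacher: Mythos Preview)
Your proof is correct and follows essentially the same route as the paper: use the Besov--Sobolev embeddings of Theorems~\ref{th:inclusions_negative_regularity} and~\ref{th:inclusions_positive_regularity} to place the localized distributions $\varphi_f f$ and $\varphi_g g$ in suitable $H^{s_1}$ and $H^{s_2}$, infer emptiness of the relevant Sobolev wave-front sets, and invoke Proposition~\ref{Prop: Sobolev product of distributions} via the strict inequality $\alpha+\beta^\ast_{\varphi_g g}>0$. You are in fact slightly more explicit than the paper in treating points outside $\operatorname{singsupp}(f)\cap\operatorname{singsupp}(g)$, in transferring the wave-front information from $\varphi_f f,\varphi_g g$ back to $f,g$ via local invertibility of multiplication by a non-vanishing test function, and in spelling out the canonicity conclusion through Remarks~\ref{Rmk: product as pull-back} and~\ref{Rmk: canonicity}.
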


\begin{proof}
Let $x\in \operatorname{singsupp}(f)\cap\operatorname{singsupp}(g)$ and let $\varphi_f, \varphi_g \in\mathcal{D}(\mathbb{R}^d)$ as per hypothesis. 
On account of Theorem \ref{th:inclusions_positive_regularity} and of Theorem \ref{th:inclusions_negative_regularity}, it holds
\begin{gather*}
\varphi_ff \in \mathcal{C}_{c}^\alpha(\mathbb{R}^d)\;\Rightarrow\;\varphi_ff\in H^{s_1}(\mathbb{R}^d)\,,\forall\,s_1<\alpha\,,\\
\varphi_g g\in \mathcal{C}_{c}^\beta(\mathbb{R}^d)\;\Rightarrow \varphi_g g \in B^\gamma_{p,\infty,c}(\mathbb{R}^d)\,,\,\forall\,\gamma \in [\beta, \beta^\ast_{\varphi_g g}) \text{ for some } p\in[2,\infty] \Rightarrow \varphi_gg\in H^{s_2}(\mathbb{R}^d)\,,\forall\,s_2<\beta^\ast_{\varphi_g g}\,.
\end{gather*}
Definition \ref{Def: Sobolev WF} entails that $\mathrm{WF}^{s_1}(\varphi_ff)=\emptyset$ for $s_1=\alpha-\varepsilon$ and $\mathrm{WF}^{s_2}(\varphi_gg)=\emptyset$ for $s_2=\beta^\ast_{\varphi_g g}-\delta$ for any $\varepsilon,\delta\in(0,1)$. 
We observe that, on account of Equation \eqref{Eq: Novel Young condition}, it holds $s_1+s_2\geq0$ and as a consequence one can apply Proposition \ref{Prop: Sobolev product of distributions}. This concludes the proof.
\end{proof}

In the following examples, we shall focus on specific situations in which Theorem \ref{Thm: classical Young} does not directly apply but Theorem \ref{Thm: microlocal Young} does.
\begin{example}
Let $f\in\mathcal{C}^d(\mathbb{R}^d)$ be such that $\mathrm{singsupp}(f)=\{0\}$ and let $\delta_0\in\mathcal{C}^{-d}(\mathbb{R}^d)$ be the Dirac delta distribution centred at the origin.
While Theorem \ref{Thm: classical Young} does not apply since $\alpha+\beta=0$, the hypothesis of Theorem \ref{Thm: microlocal Young} are met since Equation \eqref{Eq: Novel Young condition} reduces to

\begin{align*}
d - \frac{d}{2} = \frac{d}{2}>0\,.
\end{align*}  
As a consequence of Theorem \ref{Thm: microlocal Young}, the product $f\cdot\delta_0\in\mathcal{D}^\prime(\mathbb{R}^d)$ exists. 
Moreover, it is defined by $(f\delta_0)(\varphi)=\delta_0(f\varphi)$ for any $\varphi\in\mathcal{D}(\mathbb{R}^d)$, yielding $f\delta_0=f(0)\delta_0$.
Observe that we could have also picked $f\in\mathcal{C}^{\frac{d}{2}+\varepsilon}(\mathbb{R}^d)$ for any $\varepsilon>0$ without hindering the existence of the pointwise product $f\cdot\delta_0$. Yet, in such a scenario, with the notations of Theorem \ref{Thm: classical Young}, $\alpha+\beta<0$.
\end{example}

\begin{example}
Let $f \in \mathcal{C}^d(\mathbb{R}^d)$ with $d > 2$ be such that $\text{singsupp}(f)=\{0,y\}$ with $y \neq 0$. 
Moreover let $g \in \mathcal{C}^{-d-1}(\mathbb{R}^d)$ be $g = \delta_0 + \delta_y^{(1)}$, where $\delta_0$ is the Dirac delta centered at the origin while $\delta_y^{(1)}$ is the distributional derivative of the Dirac delta centered at $0\neq y \in \mathbb{R}^d$. Since $\alpha + \beta = d -d-1=-1$, then the classical Young theorem, \textit{cf.} Theorem \ref{Thm: classical Young} cannot be applied and we cannot multiply $f$ and $g$. 
Nevertheless, the hypotheses of Theorem \ref{Thm: microlocal Young} are met. 
If we consider the origin, there exist two test functions $\varphi_f\in \mathcal{D}$ with $\varphi_f(0)\neq 0$ and $\varphi_g \in \mathcal{D}$ with $\varphi_g(0)\neq 0$ and $y \not \in \text{supp}(\varphi_g)$ such that the condition on the regularities, represented by Equation \eqref{Eq: Novel Young condition} is fulfilled. 
It descends
\[
\alpha + \beta^\ast_{\varphi_g g} = d -d/2 = d/2 > 0\,.
\]
Concerning the point $y$, we can localize $f$ and $g$ by two test functions $\psi_f\in \mathcal{D}$ with $\psi_f(y)\neq 0$ and $\psi_g \in \mathcal{D}$ with $\psi_g(y)\neq 0$ and $0 \not \in \text{supp}(\psi_g)$ in such a way that
\[
\alpha + \beta^\ast_{\psi_g g} = d - d/2 - 1 = d/2 - 1 > 0\,,
\]
where we used the hypothesis $d > 2$ and the fact that $\beta^\ast_{\delta_y^{(1)}} = -d/2 - 1$. 
Thus there exists $f \cdot g \in \mathcal{D}^\prime(\mathbb{R}^d)$ as per Theorem \ref{Thm: microlocal Young}.
\end{example}

The hypothesis of Theorem \ref{Thm: microlocal Young} are quite general but, as a matter of fact, if we make further assumptions concerning the regularity of the distributions we wish to multiply, then we can draw more conclusions on the regularity of their product and on the continuity of such operation.

\begin{theorem}\label{Thm: microlocal Young II}
Let $f \in \mathcal{C}_{\mathrm{loc}}^\alpha(\mathbb{R}^d)$ and $g \in \mathcal{C}_{\mathrm{loc}}^\beta(\mathbb{R}^d)$ with $\alpha > 0$ and $\beta < 0$. If $\text{singsupp}(f) \cap \text{singsupp}(g) \neq \emptyset$ and 
\begin{align}\label{Eq: novel Young condition II}
\alpha + \beta^\ast_g > 0\,,
\end{align} 
there exists the product $f \cdot g \in \mathcal{C}_{\mathrm{loc}}^\beta(\mathbb{R}^d)$. Moreover for any compact set $K \subset \mathbb{R}^d$, one has
\begin{align}\label{Eq: continuity of microlocal young product}
\|f\cdot g\|_{\mathcal{C}^{\beta}(K)}\lesssim\|f\|_{\mathcal{C}^{\alpha}(\overline{K}_1)}\|g\|_{\mathcal{C}^{\beta}(\overline{K}_1)}\,,
\end{align}
where $\overline{K}_1$ denotes the $1$-enlargement of $K$.
\end{theorem}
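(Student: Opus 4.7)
The plan is to deduce existence of $f\cdot g\in\mathcal{D}^\prime(\mathbb{R}^d)$ from Theorem \ref{Thm: microlocal Young}, and to establish the estimate \eqref{Eq: continuity of microlocal young product} by a direct test-function calculation, leveraging that $\alpha>0$ makes $f$ amenable to a Taylor expansion of order $\lfloor\alpha\rfloor$.

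\emph{Existence.} At any $x\in\operatorname{singsupp}(f)\cap\operatorname{singsupp}(g)$ I would pick smooth cutoffs $\varphi_f,\varphi_g\in\mathcal{D}(\mathbb{R}^d)$ with $\varphi_f(x)\neq 0 \neq \varphi_g(x)$. Multiplication by a compactly supported smooth function preserves the Besov scale $B^{\bullet}_{p,\infty,\mathrm{loc}}(\mathbb{R}^d)$ for every $p\in[2,\infty]$, hence $\beta^{\ast}_{\varphi_g g}\geq\beta^{\ast}_g$, and the hypothesis \eqref{Eq: Novel Young condition} of Theorem \ref{Thm: microlocal Young} is automatic, so a canonical product is available.

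\emph{Continuity estimate.} Fix a compact $K\subset\mathbb{R}^d$, $x\in K$, $\lambda\in(0,1]$ and $\varphi\in\mathcal{B}_{-\lfloor\beta\rfloor}$ supported in $B(0,1)$. I would estimate $|(f\cdot g)(\varphi^\lambda_x)|$ by means of mollifications $f_\varepsilon=f\ast\rho_\varepsilon$, for which $f_\varepsilon g\to f\cdot g$ in $\mathcal{D}^\prime(\mathbb{R}^d)$ by continuity of the product construction via Sobolev wave-front sets. Then $(f_\varepsilon g)(\varphi^\lambda_x)=g(f_\varepsilon\varphi^\lambda_x)$, and I would split $f_\varepsilon = P^\varepsilon_x + R^\varepsilon_x$ into its Taylor polynomial at $x$ and the H\"older remainder, with the uniform bound $|R^\varepsilon_x(y)|\lesssim \|f\|_{\mathcal{C}^\alpha(\overline{K}_1)}|y-x|^\alpha$ on $B(x,\lambda)$. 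The polynomial contribution is easy: each monomial satisfies $(y-x)^k\varphi^\lambda_x(y)=\lambda^{|k|}(\psi_k)^\lambda_x(y)$ with $\psi_k(z)=(-z)^k\varphi(z)$ in a bounded subset of $\mathcal{B}_{-\lfloor\beta\rfloor}$, so applying Definition \ref{Def: C-alpha-loc-negativi} to $g$ yields a bound $\lesssim\|f\|_{\mathcal{C}^\alpha(\overline{K}_1)}\|g\|_{\mathcal{C}^\beta(\overline{K}_1)}\lambda^\beta$ (using also $\lambda^{|k|}\leq 1$ and $|\partial^k f_\varepsilon(x)|\lesssim\|f\|_{\mathcal{C}^\alpha(\overline{K}_1)}$ uniformly in $\varepsilon$ for $|k|\leq\lfloor\alpha\rfloor$).

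The main obstacle is the remainder term $g(R^\varepsilon_x\varphi^\lambda_x)$, since after the natural rescaling $R^\varepsilon_x\varphi^\lambda_x$ becomes a H\"older (not smooth) bump, and one cannot invoke the $\mathcal{C}^\beta$-characterization of $g$ verbatim. Here the improved Sobolev regularity of $g$ enters crucially: by Theorem \ref{th:inclusions_negative_regularity}, $g\in H^{s_2}_{\mathrm{loc}}(\mathbb{R}^d)$ for every $s_2<\beta^{\ast}_g$, while Theorem \ref{th:inclusions_positive_regularity} gives $f_\varepsilon\in H^{s_1}_{\mathrm{loc}}(\mathbb{R}^d)$ uniformly in $\varepsilon$ for every $s_1<\alpha$, and the hypothesis $\alpha+\beta^{\ast}_g>0$ allows the choice $s_1+s_2\geq 0$. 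A Plancherel-based estimate of the pairing $\langle g,R^\varepsilon_x\varphi^\lambda_x\rangle$ through the duality $H^{s_2}\times H^{-s_2}$, combined with the scaling behavior of $R^\varepsilon_x\varphi^\lambda_x$ at scale $\lambda$ in $H^{-s_2}$, should then produce an estimate of the form $|g(R^\varepsilon_x\varphi^\lambda_x)|\lesssim\lambda^{\alpha+\beta}\|f\|_{\mathcal{C}^\alpha(\overline{K}_1)}\|g\|_{\mathcal{C}^\beta(\overline{K}_1)}$, which is majorised by $\lambda^\beta$ times the same norm product since $\alpha>0$ and $\lambda\leq 1$. Sending $\varepsilon\to 0$ and taking the supremum over admissible $(\varphi,\lambda,x)$ then delivers \eqref{Eq: continuity of microlocal young product}. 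The delicate step is making this quantitative pairing work out so that the $\beta^{\ast}_g$-improvement of the Sobolev regularity of $g$ is converted into the missing power of $\lambda$ needed to close the $\mathcal{C}^\beta$-estimate.
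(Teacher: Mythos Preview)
Your existence argument and the treatment of the Taylor-polynomial contribution coincide with the paper's. The divergence is in the remainder term $g(R_x\,\varphi^\lambda_x)$: you regard it as the hard step and propose a Sobolev-duality argument exploiting $g\in H^{s_2}_{\mathrm{loc}}$ with $s_2<\beta^{\ast}_g$, whereas the paper never invokes Sobolev regularity at this point. Instead the paper writes the remainder in the form $R(x,y)=\sum_{|k|=\lfloor\alpha\rfloor}h_k(y)\,(y-x)^k$, extracts a factor $\lambda^{\lfloor\alpha\rfloor}$ from $(y-x)^k$ via the very same rescaling trick you use for the polynomial part, and then uses that $|h_k(y)|$ behaves like $|y-x|^{\alpha-\lfloor\alpha\rfloor}$ as $y\to x$ to supply the missing $\lambda^{\alpha-\lfloor\alpha\rfloor}$. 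This yields directly $|g(R_x\,\varphi^\lambda_x)|\lesssim\|f\|_{\mathcal{C}^\alpha(\overline{K}_1)}\|g\|_{\mathcal{C}^\beta(\overline{K}_1)}\lambda^{\alpha+\beta}$, which is $\leq\|f\|_{\mathcal{C}^\alpha(\overline{K}_1)}\|g\|_{\mathcal{C}^\beta(\overline{K}_1)}\lambda^{\beta}$ since $\alpha>0$. In particular, in the paper the hypothesis $\alpha+\beta^{\ast}_g>0$ enters \emph{only} through the existence step (Theorem \ref{Thm: microlocal Young}) and the identification $(f\cdot g)(\psi)=g(f\psi)$; the continuity bound \eqref{Eq: continuity of microlocal young product} itself is a pure $\mathcal{C}^\alpha$--$\mathcal{C}^\beta$ computation.

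Your Sobolev route for the remainder is left as a hope, and it is not obvious it closes as stated: the natural pairing gives $|g(\phi)|\lesssim\|g\|_{H^{s_2}(\overline{K}_1)}\|\phi\|_{H^{-s_2}}$ with $-s_2>0$, and controlling $\|R^\varepsilon_x\varphi^\lambda_x\|_{H^{-s_2}}$ \emph{uniformly in $\varepsilon$} by the right power of $\lambda$ and by $\|f\|_{\mathcal{C}^\alpha(\overline{K}_1)}$ (rather than by an $H^{s_1}$-norm of $f$) is precisely the issue you flag as delicate; moreover the output would be phrased in terms of $\|g\|_{H^{s_2}}$ rather than $\|g\|_{\mathcal{C}^\beta}$, necessitating a further embedding step. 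The paper's direct rescaling is considerably shorter and avoids all of this, so for the remainder you can drop the Sobolev duality and mimic the elementary argument instead. Your mollification $f_\varepsilon\to f$ is, on the other hand, a clean way to justify the identity $(f\cdot g)(\psi)=g(f\psi)$ that the paper simply asserts.
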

\begin{proof}
Observe that the hypotheses of Theorem \ref{Thm: microlocal Young} are met and thus $f\cdot g\in\mathcal{D}^\prime(\mathbb{R}^d)$ and, in addition, $(f\cdot g)(\psi)=g(f\psi)$ for any $\psi\in\mathcal{D}(\mathbb{R}^d)$.
Using Definition \ref{Def: C-alpha-loc}, for any but fixed compact set $K\subset\mathbb{R}^d$, it follows that
\begin{align*}
f(y)=P_x(y)+R(x,y)\,,\qquad\forall\,x,y\in K\,,
\end{align*} 
where $P_x(\cdot)$ is the Taylor polynomial of $f$ centered at $x$ of order $\lfloor\alpha\rfloor$, while $R$ is the remainder which satisfies $\lvert R(x,y)\rvert\leq\|f\|_{\mathcal{C}^\alpha(K)}\lvert x-y\rvert^\alpha$. In addition $R$ can be written as
\begin{equation}\label{eq:remainder}
R(x,y) = \sum_{\lvert k \rvert = \lfloor \alpha \rfloor} h_k(y) (y-x)^k\,,
\end{equation}
where $h_k$ is such that
\begin{align*}
\lim_{y \to x} \frac{\lvert h_k(y) \rvert}{\lvert y - x \rvert^{\alpha - \lfloor \alpha \rfloor}} = C_f \leq \|f\|_{\mathcal{C}^\alpha(K)}\,.
\end{align*}
As a consequence, for any $x\in K$, $\psi\in\mathcal{D}(B(0,1))$, $\lambda\in(0,1]$ and denoting with $\overline{K}_1$ the $1$-enlargement of $K$, we have
\begin{align*}
\lvert(fg)(\psi^\lambda_x)\rvert
&=\lvert g(f\psi^\lambda_x)\rvert
\leq\underbrace{\lvert g(P_x(\cdot)\psi^\lambda_x)\rvert}_{\lvert A\rvert}+\underbrace{\lvert g(R(x,\cdot)\psi^\lambda_x)\rvert}_{\lvert B \rvert}\,.
\end{align*}
Focusing on $\lvert A\rvert$, on account of the triangular inequality and being $g$ a linear map we obtain 
\begin{align*}
\lvert A\rvert\leq\sum_{\lvert k\rvert<\alpha}\frac{\lvert\partial^kf(x)\rvert}{k!}\lvert g((\cdot-x)^k\psi^\lambda_x)\rvert\,.
\end{align*}
Setting $\eta(y)\vcentcolon=y^k\psi(y)$, we have
\begin{align*}
(y-x)^k \psi^\lambda_x(y) = \lambda^{\lvert k \rvert} \eta^\lambda_x(y)\,.
\end{align*}
It follows, 
\begin{equation}
\lvert g((\cdot-x)^k \psi^\lambda_x) \rvert = \lambda^{\lvert k \rvert} \lvert g(\eta^\lambda_x)\rvert \lesssim \|g\|_{\mathcal{C}^\beta(\bar{K}_1)} \lambda^{\beta+\rvert k \rvert}\,,
\end{equation}
and then
\begin{align*}
\lvert A\rvert\lesssim\|g\|_{\mathcal{C}^\beta(\bar{K}_1)}\sum_{\lvert k\rvert<\alpha} \frac{\|\partial^kf\|_{L^\infty(K)}}{k!}\lambda^{\beta+\rvert k\rvert}\lesssim\|f\|_{\mathcal{C}^\alpha(\bar{K}_1)}\|g\|_{\mathcal{C}^\beta(\bar{K}_1)}\lambda^\beta\,,
\end{align*}
where we used the characterization of $g$ as per Definition \ref{Def: C-alpha-loc-negativi} and $\|\partial^kf\|_{L^\infty(K)}\lesssim\|f\|_{\mathcal{C}^\alpha(\bar{K}_1)}$. 
Focusing on $\lvert B\rvert$, we have
\begin{align*}
\lvert B\rvert\leq \sum_{\lvert k\rvert=\lfloor\alpha\rfloor}\lvert g(h_k(\cdot)(\cdot-x)^k \psi_x^\lambda)\rvert\,,
\end{align*}
and, setting again $\eta(y)\vcentcolon=y^k\psi(y)$, we have
\begin{align*}
\lvert B\rvert\leq\sum_{\lvert k\rvert=\lfloor\alpha\rfloor}\lambda^{\lvert k\rvert}\lvert g(h_k(\cdot)\eta_x^\lambda)\rvert\,.
\end{align*}
In the limit $\lambda\to0^+$, $h_k(y)\approx C_f\sum_{\lvert m\rvert=\alpha-\lfloor\alpha\rfloor}(y-x)^{m}$ for $y$ sufficiently close to $x$. Hence
\begin{align*}
\lvert B \rvert \lesssim C_f \sum_{\lvert k \rvert = \lfloor \alpha \rfloor} \sum_{\lvert m \rvert = \alpha -\lfloor \alpha \rfloor}\lambda^{\lvert k \rvert}\lvert g((\cdot-x)^{m}\eta_x^\lambda)\rvert\lesssim C_f\|g\|_{\mathcal{C}^\beta(\bar{K}_1)}\lambda^\beta\leq\|f\|_{\mathcal{C}^\alpha(\bar{K}_1)}\|g\|_{\mathcal{C}^\beta(\bar{K}_1)}\lambda^\beta\,.
\end{align*}
Together with the estimates for $|A|$, this entails $f\cdot g\in\mathcal{C}^{\beta}_{\mathrm{loc}}(\mathbb{R}^d)$ as well as Equation \eqref{Eq: continuity of microlocal young product}. 
\end{proof}

\begin{example}
Let $g\colon \mathbb{R}^d\to \mathbb{R}$ be $g(x) = \lvert x \rvert^\beta$ with $\beta < -d/2$ and let $f \in \mathcal{C}^\alpha(\mathbb{R}^d)$ with $\alpha > 0$ be such that $0 \in \text{singsupp}(f)$. Then there exists the product $f \cdot g$ if $\alpha + \beta^\ast_g > 0$, {\it i.e.} if $\alpha + \beta > -d/2$.
\end{example}

\begin{remark}
We observe that in the hypotheses of Theorem \ref{Thm: microlocal Young II}, if the singular supports of the underlying distributions $f$ and $g$ are disjoint, we can conclude that $f\cdot g\in\mathcal{C}^{\beta}_{\mathrm{loc}}(\mathbb{R}^d)$ since locally this product is that between a smooth function and a distribution.
\end{remark}

We observe that Theorem \ref{Thm: microlocal Young} and Theorem \ref{Thm: microlocal Young II}, particularly Equations \eqref{Eq: Novel Young condition} and \eqref{Eq: novel Young condition II}, entail that, in order to multiply $f\in\mathcal{C}^\alpha_{\textrm{loc}}(\mathbb{R}^d)$ and $g\in\mathcal{C}^\beta_{\textrm{loc}}(\mathbb{R}^d)$, at least one among the parameters $\alpha$ and $\beta$ has to be strictly positive.
In the remaining part of this section, we shall scrutinize the cases where Equation \eqref{Eq: Novel Young condition} is not satisfied and, hence the product between $f$ and $g$ \textit{a priori} does not exist.
Nonetheless, using the notion of scaling degree, see Appendix \ref{App: sd}, it is possible to establish sufficient conditions allowing to bypass this obstruction. 
As first step, we consider $f\in\mathcal{C}^\alpha_{\mathrm{loc}}(\mathbb{R}^d)$ and $g\in\mathcal{C}^\beta_{\mathrm{loc}}(\mathbb{R}^d)$ having disjoint singular supports, while $\alpha,\beta<0$. 
The following result holds true.

\begin{lemma}\label{Lem: disjoint singsupps}
Let $\alpha,\beta<0$ and let $f\in\mathcal{C}^\alpha_{\mathrm{loc}}(\mathbb{R}^d)$ and $g\in\mathcal{C}^\beta_{\mathrm{loc}}(\mathbb{R}^d)$ be such that $\operatorname{singsupp}(f)\cap\operatorname{singsupp}(g)=\emptyset$. 
Then there exists a canonical product $f\cdot g\in \mathcal{C}^{\alpha+\beta}_{\mathrm{loc}}(\mathbb{R}^d)$.
\end{lemma}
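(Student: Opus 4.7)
The plan is to exploit the disjointness of the singular supports so as to split the product into two pieces, each of which involves multiplying a distribution by a smooth function.

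Since $\operatorname{singsupp}(f)$ and $\operatorname{singsupp}(g)$ are disjoint closed subsets of $\mathbb{R}^d$, I would first construct a smooth partition of unity $\chi_1,\chi_2 \in \mathcal{E}(\mathbb{R}^d)$ with $\chi_1+\chi_2 = 1$, $\operatorname{supp}(\chi_1)\cap\operatorname{singsupp}(g)=\emptyset$ and $\operatorname{supp}(\chi_2)\cap\operatorname{singsupp}(f)=\emptyset$. By construction $\chi_1 g$ and $\chi_2 f$ are both elements of $\mathcal{E}(\mathbb{R}^d)$, since each is a smooth cutoff multiplied by a distribution that is smooth on the support of the cutoff.

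Next I would set
\[
f\cdot g := f\cdot (\chi_1 g) + (\chi_2 f)\cdot g,
\]
where each summand is the ordinary product of a distribution by a smooth function. Using the characterization of $\mathcal{C}^\gamma_{\mathrm{loc}}(\mathbb{R}^d)$ with $\gamma<0$ via scaled test functions (Definition \ref{Def: C-alpha-loc-negativi}), together with the Leibniz rule applied to $\psi_x^\lambda$, multiplication by a smooth function preserves negative-H\"{o}lder regularity. Hence $f\cdot(\chi_1 g)\in \mathcal{C}^\alpha_{\mathrm{loc}}(\mathbb{R}^d)$ and $(\chi_2 f)\cdot g\in \mathcal{C}^\beta_{\mathrm{loc}}(\mathbb{R}^d)$. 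Since $\alpha,\beta<0$, their sum lies in $\mathcal{C}^{\min(\alpha,\beta)}_{\mathrm{loc}}(\mathbb{R}^d)\subset \mathcal{C}^{\alpha+\beta}_{\mathrm{loc}}(\mathbb{R}^d)$, giving the claimed regularity.

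To establish canonicity in the sense of Remark \ref{Rmk: canonicity}, I would verify that the preceding construction realises $\Delta^\ast(f\otimes g)$. The tensor product $f\otimes g\in\mathcal{D}^\prime(\mathbb{R}^{2d})$ always exists, and its smooth wavefront set satisfies the standard inclusion recalled in Proposition B.5 of \cite{JS02}. Its intersection with the conormal bundle $N_\Delta=\{(x,x,\xi,-\xi): \xi\neq 0\}$ of the diagonal is empty: a contribution from $\mathrm{WF}(f)\times \mathrm{WF}(g)$ would force $x\in \operatorname{singsupp}(f)\cap\operatorname{singsupp}(g)$, contradicting the hypothesis, whereas contributions from $(\operatorname{supp}(f)\times\{0\})\times\mathrm{WF}(g)$ or $\mathrm{WF}(f)\times(\operatorname{supp}(g)\times\{0\})$ never produce vectors of the form $(\xi,-\xi)$ with $\xi\neq 0$. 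Hence $\Delta^\ast(f\otimes g)$ is well-defined and, tested against functions supported in the regions cut out by $\chi_1,\chi_2$, it agrees with each of the summands $f\cdot(\chi_1 g)$ and $(\chi_2 f)\cdot g$; independence from the choice of partition of unity follows from the uniqueness of the pull-back, along the same lines as \cite[Thm. 8.2.4]{Hor03}. The main obstacle is checking the claim that smooth functions act continuously on $\mathcal{C}^\gamma_{\mathrm{loc}}(\mathbb{R}^d)$ for $\gamma<0$, which calls for a direct estimate on the scaled test functions; the rest reduces to routine wavefront-set bookkeeping.
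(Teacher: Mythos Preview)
Your argument is correct, and it takes a genuinely different route from the paper. The paper establishes existence of $f\cdot g$ by the standard disjoint-singular-support argument, and then obtains the regularity statement via the scaling degree: it uses the subadditivity $\mathrm{sd}_x(f\cdot g)\leq \mathrm{sd}_x(f)+\mathrm{sd}_x(g)\leq -\alpha-\beta$ together with Proposition~\ref{Prop: link between sd and Holder} to conclude $f\cdot g\in\mathcal{C}^{\alpha+\beta}_{\mathrm{loc}}(\mathbb{R}^d)$. Your approach instead splits the product through a smooth Urysohn partition adapted to the two singular supports, reducing each piece to a smooth function multiplied by a H\"older distribution.

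What each approach buys: your partition-of-unity argument is more elementary (no appeal to the scaling-degree machinery of Appendix~\ref{App: sd}) and in fact yields a sharper conclusion, namely $f\cdot g\in\mathcal{C}^{\min(\alpha,\beta)}_{\mathrm{loc}}(\mathbb{R}^d)$, before you weaken it to $\mathcal{C}^{\alpha+\beta}_{\mathrm{loc}}(\mathbb{R}^d)$; the paper explicitly notes this loss in Remark~\ref{Rem: worsening of regularity}. On the other hand, the paper's use of the scaling degree is coherent with the subsequent Theorem~\ref{Thm: microlocal Young with sd}, where the same tool governs the extension across the intersection of the singular supports; so invoking it here sets up that later argument. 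Your verification of canonicity via $\mathrm{WF}(f\otimes g)\cap N_\Delta=\emptyset$ is the same underlying mechanism the paper takes for granted when it says ``there exists $f\cdot g\in\mathcal{D}^\prime(\mathbb{R}^d)$''.
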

\begin{proof}
Since the singular supports are disjoint, there exists $f\cdot g\in\mathcal{D}^\prime(\mathbb{R}^d)$. 
Moreover, for any $x\in\mathbb{R}^d$
\begin{align*}
\mathrm{sd}_x(f\cdot g)\leq\mathrm{sd}_x(f)+\mathrm{sd}_x(g)\leq-\alpha-\beta\,,
\end{align*}
where $sd_x$ denotes the scaling degree at $x\in\mathbb{R}^d$, see Definition \ref{Def: scaling degree} and where we exploited Remark \ref{Rmk: sd product estimate}.
Thus, on account of Proposition \ref{Prop: link between sd and Holder}, $f\cdot g$ belongs to $\mathcal{C}^{-\mathrm{sd}_x(f\cdot g)}$ around $\{x\}$, \textit{i.e.}, $f\cdot g\in\mathcal{C}^{\alpha+\beta}_{\mathrm{loc}}(\mathbb{R}^d)$. 
\end{proof}

\begin{remark}\label{Rem: worsening of regularity}
	Observe that in Lemma \ref{Lem: disjoint singsupps} and in the following statements, we can establish only that $f\cdot g\in\mathcal{C}^{\alpha+\beta}_{\mathrm{loc}}(\mathbb{R}^d)$. This is a space containing more singular distributions in comparison to the results of Theorem \ref{Thm: microlocal Young} and Theorem \ref{Thm: microlocal Young II} in which we establish that $f\cdot g\in\mathcal{C}^{\beta}_{\mathrm{loc}}(\mathbb{R}^d)$.
\end{remark}

Subsequently, suppose $f\in\mathcal{C}^\alpha_{\mathrm{loc}}(\mathbb{R}^d)$ and $g\in\mathcal{C}^\beta_{\mathrm{loc}}(\mathbb{R}^d)$ and assume that there exists $x\in\mathbb{R}^d$ such that $\{x\}=\operatorname{singsupp}(f)\cap\operatorname{singsupp}(g)$. If Equation \eqref{Eq: novel Young condition II} does not hold true, our strategy to define the product between $f$ and $g$ consists of applying Lemma \ref{Lem: disjoint singsupps} on $\mathbb{R}^d\setminus\{x\}$, using subsequently an extension procedure as per Appendix \ref{App: sd}. This is the spirit at the heart of the following theorem.

\begin{theorem}\label{Thm: microlocal Young with sd}
Let $f\in\mathcal{C}^\alpha_{\mathrm{loc}}(\mathbb{R}^d)$ and $g\in\mathcal{C}^\beta_{\mathrm{loc}}(\mathbb{R}^d)$, with $\alpha,\beta\in\mathbb{R}$. Assume that $\operatorname{singsupp}(f)\cap\operatorname{singsupp}(g)=\{x\}$ with $x\in\mathbb{R}^d$.
Then
\begin{enumerate}[(i)]
\item If $\alpha>0$ and $\beta\in(-d,0)$ are such that 
\begin{align*}
\alpha+\beta^*_{g}\leq0\,,
\end{align*}
there exists a unique extension $\widetilde{f\cdot g}\in\mathcal{C}^\beta_{\mathrm{loc}}(\mathbb{R}^d)$  of $f\cdot g\in\mathcal{C}^\beta_{\mathrm{loc}}(\mathbb{R}^d\setminus\{x\})$ which preserves the scaling degree;
\item If $\alpha<0$ and $\beta<0$, there exists an extension $\widetilde{f\cdot g}\in \mathcal{C}^{\alpha+\beta}_{\mathrm{loc}}(\mathbb{R}^d)$ of $f\cdot g\in\mathcal{C}^{\alpha+\beta}_{\mathrm{loc}}(\mathbb{R}^d\setminus\{x\})$ which preserves the scaling degree. 
Moreover, if $\alpha+\beta>-d$, such extension is unique.
\end{enumerate}
\end{theorem}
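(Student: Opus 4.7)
The plan is to construct $\widetilde{f\cdot g}$ in two stages: first, define a canonical product on the punctured space $\mathbb{R}^d\setminus\{x\}$, where the singular supports of $f$ and $g$ no longer intersect; second, extend it to all of $\mathbb{R}^d$ by means of the scaling-degree machinery recalled in Appendix \ref{App: sd}.

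For the first stage, recall that by hypothesis $\operatorname{singsupp}(f)\cap\operatorname{singsupp}(g)=\{x\}$, so the restrictions of $f$ and $g$ to $\mathbb{R}^d\setminus\{x\}$ have disjoint singular supports. In case (ii), where both $\alpha,\beta<0$, Lemma \ref{Lem: disjoint singsupps} applies on any relatively compact open subset of $\mathbb{R}^d\setminus\{x\}$ and directly yields $f\cdot g\in \mathcal{C}^{\alpha+\beta}_{\mathrm{loc}}(\mathbb{R}^d\setminus\{x\})$. In case (i), where $\alpha>0$ and $\beta\in(-d,0)$, one works locally: around every $y\neq x$ one may choose cut-offs avoiding $x$ and localizing so that at most one factor is singular; the product then reduces either to the product of $f\in\mathcal{C}^\alpha_{\mathrm{loc}}$ with a smooth function (which lies in $\mathcal{C}^\alpha_{\mathrm{loc}}\subset \mathcal{C}^\beta_{\mathrm{loc}}$ in the distributional sense, since $\alpha>0>\beta$) or to the product of $g\in \mathcal{C}^\beta_{\mathrm{loc}}$ with a smooth function (which remains in $\mathcal{C}^\beta_{\mathrm{loc}}$). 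A partition of unity argument on $\mathbb{R}^d\setminus\{x\}$ patches these pieces together and produces $f\cdot g\in \mathcal{C}^\beta_{\mathrm{loc}}(\mathbb{R}^d\setminus\{x\})$.

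For the second stage, one estimates the scaling degree of the product at $x$. By Remark \ref{Rmk: sd product estimate} one has $\operatorname{sd}_x(f\cdot g)\leq \operatorname{sd}_x(f)+\operatorname{sd}_x(g)$, while Proposition \ref{Prop: link between sd and Holder} gives $\operatorname{sd}_x(h)\leq \max(0,-\gamma)$ for any $h\in \mathcal{C}^\gamma_{\mathrm{loc}}$. In case (i) this yields $\operatorname{sd}_x(f\cdot g)\leq -\beta <d$, since $\beta\in(-d,0)$. In case (ii) it gives $\operatorname{sd}_x(f\cdot g)\leq -\alpha-\beta$, which is strictly below $d$ precisely when $\alpha+\beta>-d$. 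The extension theorem in Appendix \ref{App: sd} then asserts that a distribution defined on $\mathbb{R}^d\setminus\{x\}$ with finite scaling degree at $x$ admits an extension to $\mathcal{D}^\prime(\mathbb{R}^d)$ preserving the scaling degree, the extension being unique when the scaling degree is strictly less than $d$. This grants all the existence and uniqueness claims in (i) and (ii).

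Finally, one verifies that $\widetilde{f\cdot g}$ lies in the announced local H\"older space. Since the extension preserves the scaling degree at $x$, a further application of Proposition \ref{Prop: link between sd and Holder} provides the required estimate of Definition \ref{Def: C-alpha-loc-negativi} in a neighborhood of $x$, while the regularity elsewhere is the one obtained in the first stage. The main obstacle lies in the borderline regime of case (ii) with $\alpha+\beta\leq -d$: here the extension is not unique and any two choices differ by a finite linear combination of derivatives of $\delta_x$, so one must check that the unavoidable ambiguity does not escape the target space $\mathcal{C}^{\alpha+\beta}_{\mathrm{loc}}(\mathbb{R}^d)$. This follows because $\delta_x\in \mathcal{C}^{-d}_{\mathrm{loc}}(\mathbb{R}^d)$ and each additional derivative costs exactly one unit of H\"older regularity, so derivatives of order up to $\operatorname{sd}_x(f\cdot g)-d\leq -\alpha-\beta-d$ remain inside $\mathcal{C}^{\alpha+\beta}_{\mathrm{loc}}(\mathbb{R}^d)$.
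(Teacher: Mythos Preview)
Your argument is correct and follows essentially the same two-stage strategy as the paper: define the product on $\mathbb{R}^d\setminus\{x\}$ using disjointness of singular supports (the paper invokes Theorem \ref{Thm: microlocal Young II} for case (i), which on the punctured space amounts to your partition-of-unity argument), then extend via Theorem \ref{Thm: extension with scaling degree} after bounding the scaling degree. One minor point: the bound $\operatorname{sd}_x(h)\leq\max(0,-\gamma)$ for $h\in\mathcal{C}^\gamma_{\mathrm{loc}}$ does not follow from Proposition \ref{Prop: link between sd and Holder}, which is the converse implication; rather, it is an immediate consequence of Definition \ref{Def: C-alpha-loc-negativi} (for $\gamma<0$) and of continuity (for $\gamma>0$), exactly as the paper uses it.
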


\begin{proof}
$\textbf{\textit{(i)}:}$ Starting from the case $\alpha>0$ and $\beta\in(-d,0]$, Theorem \ref{Thm: microlocal Young II} entails that there exists $f\cdot g\in\mathcal{C}^\beta_{\mathrm{loc}}(\mathbb{R}^d\setminus\{x\})$ where, with a slight abuse of notation we are indicating with the same symbols the restrictions of $f$ and $g$ to $\mathbb{R}^d\setminus\{x\}$.
In addition observe that, on account of Remark \ref{Rmk: sd product estimate} and of Equation \eqref{Eq: Holder characterization of negative regularity},
\begin{align*}
\mathrm{sd}_x(f\cdot g)\leq\mathrm{sd}_x(f)+\mathrm{sd}_x(g)\leq-\beta<d\,,
\end{align*}
where we exploited $\mathrm{sd}_x(f)\leq0$, which is a consequence of the continuity of $f$ as a function, being $\alpha>0$.
As a consequence, we can apply Theorem \ref{Thm: extension with scaling degree} which entails the existence of a unique extension $\widetilde{f\cdot g}\in\mathcal{D}^\prime(\mathbb{R}^d)$ of $f\cdot g\in\mathcal{C}^\beta_{\mathrm{loc}}(\mathbb{R}^d\setminus\{x\})$ which preserves the scaling degree. 
As a consequence, in a neighbourhood of $x$ we have $\widetilde{f\cdot g}\in\mathcal{C}^{-\mathrm{sd}_x(f\cdot g)}\subseteq\mathcal{C}^\beta$. This entails $\widetilde{f\cdot g}\in\mathcal{C}^{\beta}_{\mathrm{loc}}(\mathbb{R}^d)$.

$\textbf{\textit{(ii)}:}$ In this scenario, on account of Lemma \ref{Lem: disjoint singsupps} we can conclude that $f\cdot g\in\mathcal{C}^{\alpha+\beta}_{\mathrm{loc}}(\mathbb{R}^d\setminus\{x\})$. 
Moreover, once more on account of Remark \ref{Rmk: sd product estimate}, $\mathrm{sd}_x(f\cdot g)\leq-\alpha-\beta$. 
Similarly to the previous case, we apply Theorem \ref{Thm: extension with scaling degree} to conclude the existence of at least one extension $\widetilde{f\cdot g}\in\mathcal{D}^\prime(\mathbb{R}^d)$ of $f\cdot g\in\mathcal{C}^{\alpha+\beta}_{\mathrm{loc}}(\mathbb{R}^d\setminus\{x\})$ and to conclude that $\widetilde{f\cdot g}\in\mathcal{C}^{\alpha+\beta}_{\mathrm{loc}}(\mathbb{R}^d)$.
We underline that if $\alpha+\beta>-d$, Theorem \ref{Thm: extension with scaling degree} guarantees also uniqueness of such an extension. 
\end{proof}

\begin{example}
As an example, we consider the case of the distribution generated by the function $u(x)=|x|^{-\frac{1}{2}}$, with $x\in\mathbb{R}\setminus\{0\}$ and its product with itself.
First of all, we notice that $u\in\mathcal{C}^{-\frac{1}{2}}(\mathbb{R})$ and thus we fulfill the hypotheses of item $\textit{(ii)}$ of Theorem \ref{Thm: microlocal Young with sd} since $\operatorname{singsupp}(u)=\{0\}$.
We observe that $u\cdot u\in\mathcal{D}^\prime(\mathbb{R}\setminus\{0\})$, with $u\cdot u$ being the distribution generated by the function $u^2(x)=|x|^{-1}$, but  $u\cdot u\notin\mathcal{D}^\prime(\mathbb{R})$.
\noindent Focusing on the scaling degree at $0$ of $u\cdot u$, applying Definition \ref{Def: scaling degree}, $\mathrm{sd}_0(u\cdot u)\leq2\mathrm{sd}_0(u)=1$. 
Since $d=1$, it follows by Theorem \ref{Thm: microlocal Young with sd} that an extension exists but it is not unique. 
In particular, dwelling more into the details of Theorem \ref{Thm: extension with scaling degree}, one can conclude that there exists a one-parameter family $\{\widetilde{u\cdot u}_C\}_{C\in\mathbb{R}}\subset\mathcal{C}^{-1}(\mathbb{R})$ of extensions preserving the scaling degree.
This is the family of distributions generated by 
\begin{align*}
(\widetilde{u\cdot u})_C(x)\vcentcolon=\frac{d}{dx}\log|x|+C\delta_0(x)\,,
\end{align*}
where formally $\delta_0(x)$ denotes the integral kernel of the Dirac delta distribution centred at the origin.
\end{example}

\begin{remark}
In the previous theorem we assumed the intersection of the singular supports of $f$ and $g$ to be a point $x$. 
Nonetheless, the above argument can be generalized straightforwardly to more complex scenarios. 
As an example, if the intersection of the singular supports is given by a countable number of separated points, the same argument can be applied locally around any of these points. 
Subsequently, one can reconstruct globally the distribution $\widetilde{f\cdot g}$, of regularity $\mathcal{C}^{\alpha+\beta}_{\mathrm{loc}}(\mathbb{R}^d)$, through a partition of unity argument.  

Similarly, this argument can be extended to cases in which the intersection of the singular supports is a sub-manifold $\Sigma\subset\mathbb{R}^d$ of codimension at least $1$.
In such a case, one introduces an open covering $\{U_{j}\}_{j\in\mathbb{N}}$ of this submanifold. 
In view of the previous results, for any $j\in\mathbb{N}$, $(\varphi_j f)\cdot(\varphi_j g)\in\mathcal{D}^\prime(U_j\setminus(U_j\cap\Sigma))$ for a suitable test-function $\varphi_j\in\mathcal{D}(U_j)$ acting as a localization. 
Through the notion of scaling degree with respect to a submanifold \cite{BF00}, which is a generalization of the case discussed here, one can implement an extension procedure similar to the one of Theorem \ref{Thm: microlocal Young with sd} -- yielding a distribution on $U_j$. 
One can conclude the construction using a partition of unity argument.
\end{remark}

\begin{remark}
We stress that there exists notable instances in which the hypotheses of the previous theorems do not hold true. 
As an example it is not possible to multiply a Gaussian white noise $\xi\in\mathcal{C}^{-\frac{d}{2}-\varepsilon}(\mathbb{R}^d)$ (almost surely) with itself. This because the singular support of $\xi$ coincides almost surely with the whole $\mathbb{R}^d$, since formally the white noise is the distributional derivative of a Brownian motion, which admits no smooth restriction to any neighbourhood $U$ of any point $x\in\mathbb{R}^d$.
\end{remark}

\subsection{Application to Coherent Germs of Distributions}\label{Sec: Application to germs}

In this last part of the section we shall discuss an application of the results of the previous section in the context of coherent germs of distributions \cite{CZ20, RS21}, which provides an alternative viewpoint to the renown reconstruction theorem due to Hairer \cite{Hai14}.
In particular we shall discuss how this result provides a rationale for identifying a canonical reconstructed distribution of a suitable class of coherent germs of distributions which otherwise would be non-canonical. 

As a premise, we briefly recall the definitions of germ of distributions and of coherence, together with the statement of the reconstruction theorem as per \cite{CZ20}.

\begin{definition}
A family $(F_x)_{x\in \mathbb{R}^d}$ of distributions $F_x\in\mathcal{D}^\prime(\mathbb{R}^d)$ is called \textit{germ of distributions} if for any $\psi\in\mathcal{D}(\mathbb{R}^d)$ the map $\mathbb{R}^d\ni x\mapsto F_x(\psi)\in\mathbb{R}$ is measurable.
\end{definition}

\begin{definition}
Let $\gamma\in\mathbb{R}$. A germ of distributions $F=(F_x)_{x\in\mathbb{R}^d}$ is $\gamma$-coherent if, for any compact set $K\subset\mathbb{R}^d$, $\exists\alpha_K \leq\min\{\gamma,0\}$ such that for any $r >-\alpha_K$ 
\begin{align}\label{Eq: coherence}
|(F_x-F_y)(\varphi^\lambda_y)| \lesssim \lambda^{\alpha_K} (|x-y|+\lambda)^{\gamma-\alpha_K}\,,
\end{align}
uniformly for $x,y\in K$, $\lambda\in(0,1]$ and $\varphi\in\mathcal{B}_{r}\vcentcolon=\{\varphi\in\mathcal{D}(B(0,1))\;|\;\|\varphi\|_{C^r}\leq1\}$. If $\bm{\alpha}=(\alpha_K)$ is the family of exponents in Equation \eqref{Eq: coherence}, the germ $F$ is called $(\bm{\alpha},\gamma)$-coherent.
\end{definition}

\noindent Finally, we recall the statement of the reconstruction theorem.

\begin{theorem}[Reconstruction Theorem]\label{Thm: reconstruction theorem}
Let $F=(F_x)_{x\in\mathbb{R}^d}$ be a $(\bm{\alpha},\gamma)$-coherent germ with $\gamma\in\mathbb{R}$. Then  $\exists\mathcal{R}F\in\mathcal{D}^\prime(\mathbb{R}^d)$, dubbed \textit{reconstruction} of $F$, such that for any compact set $K\subset\mathbb{R}^d$ 
\begin{align}\label{Eq: reconstruction bound}
\lvert(\mathcal{R}F-F_x)(\varphi_x^\lambda)\rvert\lesssim 
\begin{cases}
\lambda^\gamma\,,\qquad&\text{if}\;\gamma \neq 0\,, \\
1+\lvert\log\lambda\rvert\,,\qquad&\text{if}\;\gamma=0\,,
\end{cases}
\end{align}
uniformly for $x\in K$, $\lambda\in(0,1]$, $\varphi\in\mathcal{D}(\mathbb{R}^d)$. Furthermore, if $\gamma>0$, the  distribution $\mathcal{R}F$ is unique.
\end{theorem}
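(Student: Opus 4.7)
The plan is to build $\mathcal{R}F$ as a limit of a mollified, "diagonal" averaging of the germ and then to verify the local bound and uniqueness separately.

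I would first fix a non-negative mollifier $\rho\in\mathcal{D}(B(0,1))$ with $\int\rho=1$ and, for each $n\in\mathbb{N}_0$, define a candidate approximant $R_n\in\mathcal{D}^\prime(\mathbb{R}^d)$ by
\begin{equation*}
R_n(\psi)\vcentcolon=\int_{\mathbb{R}^d}F_x(\rho^{2^{-n}}_x)\,\psi(x)\,dx\,,\qquad\psi\in\mathcal{D}(\mathbb{R}^d)\,.
\end{equation*}
Each $R_n$ is well defined because the map $x\mapsto F_x(\rho^{2^{-n}}_x)$ is measurable (by the very definition of a germ) and locally bounded (coherence controls it against the obvious $n=0$ scale). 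The reconstruction will be $\mathcal{R}F\vcentcolon=\lim_n R_n$, and the whole proof amounts to proving this limit exists in $\mathcal{D}^\prime$ and enjoys the stated estimate.

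The core estimate is on the telescoping increments $R_{n+1}-R_n$. For $\psi\in\mathcal{D}(\mathbb{R}^d)$ supported in a compact $K$, write
\begin{equation*}
R_{n+1}(\psi)-R_n(\psi)=\int \bigl[F_x(\rho_x^{2^{-(n+1)}})-F_x(\rho_x^{2^{-n}})\bigr]\psi(x)\,dx\,,
\end{equation*}
and now the trick is to \emph{re-center} one of the two terms: using the reproducing relation $\rho^{2^{-n}}_x=\eta\ast\rho^{2^{-(n+1)}}_x$ (or a direct comparison of the two kernels), one rewrites the integrand so that the difference of germs $F_x-F_y$ appears paired against a test function of scale $2^{-(n+1)}$ with $|x-y|\lesssim 2^{-n}$. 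The coherence bound \eqref{Eq: coherence} then gives
\begin{equation*}
|(F_x-F_y)(\varphi_y^{2^{-(n+1)}})|\lesssim 2^{-(n+1)\alpha_K}(2^{-n}+2^{-(n+1)})^{\gamma-\alpha_K}\lesssim 2^{-n\gamma}\,,
\end{equation*}
uniformly in $x,y$ in a slight enlargement of $K$. Integrating over $x$ against $\psi$ yields $|R_{n+1}(\psi)-R_n(\psi)|\lesssim 2^{-n\gamma}\|\psi\|_{L^1}$, hence $\{R_n(\psi)\}$ is Cauchy when $\gamma>0$ and, when $\gamma\leq 0$, a sharper argument (testing against Schauder-type test functions at each scale or equivalently pairing with $\psi$ replaced by a wavelet at scale $2^{-n}$) gives the correct growth: a single logarithmic correction when $\gamma=0$ and convergence in a weaker sense for $\gamma<0$, always yielding a well-defined distributional limit $\mathcal{R}F$.

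Having $\mathcal{R}F$, the bound \eqref{Eq: reconstruction bound} follows by writing, for $\varphi\in\mathcal{D}(B(0,1))$ and $\lambda\in(0,1]$ with $2^{-(n+1)}<\lambda\leq 2^{-n}$,
\begin{equation*}
(\mathcal{R}F-F_x)(\varphi_x^\lambda)=\sum_{k\geq n}(R_{k+1}-R_k)(\varphi_x^\lambda)+(R_n-F_x)(\varphi_x^\lambda)\,,
\end{equation*}
estimating the first sum by the increment bound above (summable, yielding $\lambda^\gamma$ if $\gamma\ne 0$ and $|\log\lambda|$ if $\gamma=0$) and the second term by a direct application of coherence at the scale $2^{-n}\sim\lambda$. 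Finally, uniqueness for $\gamma>0$ is immediate: if $\mathcal{R}_1F$ and $\mathcal{R}_2F$ are two reconstructions, then $u\vcentcolon=\mathcal{R}_1F-\mathcal{R}_2F$ satisfies $|u(\varphi_x^\lambda)|\lesssim \lambda^\gamma$ uniformly on compacts, so by Definition \ref{Def: C-alpha-loc-negativi} (and its reading for $\gamma>0$) $u$ lies in $\mathcal{C}^\gamma_{\mathrm{loc}}$ with arbitrarily fast decay of its local wavelet coefficients, forcing $u=0$.

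The main obstacle I anticipate is the regime $\gamma\leq 0$: the telescoping sum does not converge pointwise against a generic test function and one must work out the convergence on the Littlewood-Paley side (or equivalently via wavelets) so that the limit is defined as a distribution and not merely as a locally bounded function. This is where the coherence exponent $\alpha_K$, distinct from $\gamma$, plays its decisive role, since it is $\alpha_K$ and not $\gamma$ that controls the small-scale behaviour of $F_x$ itself, while $\gamma-\alpha_K$ governs the $x\mapsto F_x$ variation needed to trade mollification scales.
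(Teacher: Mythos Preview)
The paper does not prove Theorem~\ref{Thm: reconstruction theorem}; it is merely \emph{recalled} from \cite{CZ20} (see the sentence immediately preceding the statement: ``Finally, we recall the statement of the reconstruction theorem''). There is therefore no in-paper proof to compare your proposal against.

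That said, your sketch is broadly in the spirit of the Caravenna--Zambotti argument: build approximants by smearing the germ against a mollifier at dyadic scales, control telescoping increments via coherence, and pass to the limit. A few remarks on the sketch itself. The identity $\rho_x^{2^{-n}}=\eta\ast\rho_x^{2^{-(n+1)}}$ as written is not correct for generic $\rho$; what one actually does in \cite{CZ20} is to introduce an auxiliary kernel so that the two scales are linked by an exact convolution identity, and this is where the ``re-centering'' producing $F_x-F_y$ comes from. More seriously, your treatment of $\gamma\leq 0$ is only a placeholder: the sentence ``convergence in a weaker sense \ldots\ always yielding a well-defined distributional limit'' hides the entire difficulty. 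In that regime the sequence $R_n(\psi)$ for a generic $\psi$ need not be Cauchy, and one must instead test against functions whose moments vanish up to a suitable order (or, equivalently, organise the argument so that the increments are paired with rescaled test functions at the matching dyadic scale), which is precisely how \cite{CZ20} handles it. Your final paragraph correctly identifies this as the main obstacle, but the body of the proposal does not actually resolve it.
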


Within this framework, we use the previous results to prove the following statement which improves on the derivation of Young's product theorem in \cite{CZ20}.

\begin{proposition}\label{Prop: application_to_germs}
	Let $\alpha>0$, $\beta<0$ and let $f\in\mathcal{C}_{\mathrm{loc}}^\alpha(\mathbb{R}^d)$ and $g\in\mathcal{C}_{\mathrm{loc}}^\beta(\mathbb{R}^d)$ be such that they abide to the hypotheses of Theorem \ref{Thm: microlocal Young II}. Then, calling
	\begin{align*}
	P_x(y)=\sum_{k<\alpha}\frac{\partial^kf(x)}{k!}(y-x)^k\,,
	\end{align*}	
	the Taylor polynomial of $f$, $F=(F_x)_{x\in\mathbb{R}^d}$ where
	\begin{align}\label{Eq: germ of the product}
		F_x(\bullet)\vcentcolon=(P_x\cdot g)(\bullet)=\biggl(\sum_{\lvert k \rvert < \alpha} \frac{\partial^k f(x)}{k!} (\bullet- x)^k g(\bullet)\biggr)_{x\in\mathbb{R}^d}\,.
	\end{align}
is a $(\beta,\alpha+\beta)$-coherent germ, whose reconstruction as per Theorem \ref{Thm: reconstruction theorem} is $\mathcal{R}F=f\cdot g$.
\end{proposition}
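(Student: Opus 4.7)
The plan is to verify, in order, that the family $F = (F_x)_{x \in \mathbb{R}^d}$ defined by \eqref{Eq: germ of the product} is a $(\beta, \alpha+\beta)$-coherent germ, and then that the distribution $f \cdot g$, whose existence and $\mathcal{C}^\beta_{\mathrm{loc}}$-regularity are guaranteed by Theorem \ref{Thm: microlocal Young II}, satisfies the reconstruction bound of Theorem \ref{Thm: reconstruction theorem}, so that it can be identified with $\mathcal{R}F$.

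For coherence, I would compute $(F_x - F_y)(\varphi^\lambda_y) = g((P_x - P_y)\varphi^\lambda_y)$ by first re-expanding the Taylor polynomial $P_x$ in powers of $(\bullet - y)$ via the combinatorial identity
$$
P_x(z) - P_y(z) = \sum_{|j| < \alpha} \frac{(z-y)^j}{j!}\, r_j(x,y),
$$
where $r_j(x,y)$ is (up to a sign) the Taylor remainder of $\partial^j f$ at $x$ evaluated at $y$. Since $\partial^j f \in \mathcal{C}^{\alpha - |j|}_{\mathrm{loc}}$, Definition \ref{Def: C-alpha-loc} gives $|r_j(x,y)| \lesssim \|f\|_{\mathcal{C}^\alpha(\overline{K}_1)}\,|x-y|^{\alpha - |j|}$ uniformly for $x,y$ in a compact $K$. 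Setting $\eta_j(w) := w^j \varphi(w)$, the rescaling $(z-y)^j \varphi^\lambda_y(z) = \lambda^{|j|}(\eta_j)^\lambda_y(z)$ together with the characterization of $g \in \mathcal{C}^\beta_{\mathrm{loc}}$ from Definition \ref{Def: C-alpha-loc-negativi} yields $|g((\bullet-y)^j \varphi^\lambda_y)| \lesssim \|g\|_{\mathcal{C}^\beta(\overline{K}_1)}\,\lambda^{|j|+\beta}$. Summing and using the elementary inequality $|x-y|^{\alpha-|j|}\lambda^{|j|} \leq (|x-y|+\lambda)^\alpha$ produces
$$
|(F_x - F_y)(\varphi^\lambda_y)| \lesssim \lambda^\beta (|x-y|+\lambda)^\alpha,
$$
which is the $(\beta, \alpha+\beta)$-coherence condition \eqref{Eq: coherence}.

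For the identification $\mathcal{R}F = f \cdot g$, the key observation is that
$$
\bigl((f \cdot g) - F_x\bigr)(\varphi^\lambda_x) = g\bigl((f - P_x)\varphi^\lambda_x\bigr),
$$
and the right-hand side must be bounded by $\lambda^{\alpha+\beta}$ (up to a constant) in order to match the reconstruction bound \eqref{Eq: reconstruction bound}. The estimate is a sharpened version of the bound on $|B|$ appearing in the proof of Theorem \ref{Thm: microlocal Young II}: using the representation \eqref{eq:remainder} with $|h_k(y)| \lesssim \|f\|_{\mathcal{C}^\alpha(\overline{K}_1)}\, |y - x|^{\alpha - \lfloor\alpha\rfloor}$ and the same rescaling argument as above, one extracts a factor $\lambda^{\lfloor\alpha\rfloor}$ from $(y-x)^k$, a factor $\lambda^{\alpha - \lfloor\alpha\rfloor}$ from the pointwise bound on $h_k$, and the factor $\lambda^\beta$ from testing $g$ against the resulting rescaled test function. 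These combine to $\lambda^{\alpha+\beta}$. Once this is established, $f \cdot g$ is a valid reconstruction; when $\alpha+\beta>0$ the uniqueness part of Theorem \ref{Thm: reconstruction theorem} yields $\mathcal{R}F = f\cdot g$, while for $\alpha+\beta \leq 0$ the microlocal construction of Theorem \ref{Thm: microlocal Young II} singles out $f \cdot g$ as the canonical choice among possibly many reconstructions.

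The main obstacle is to make rigorous the sharper estimate $\lambda^{\alpha+\beta}$ for $|g((f-P_x)\varphi^\lambda_x)|$, and in particular to verify that, after rescaling, the function $\lambda^{-\alpha}(f-P_x)(x+\lambda\cdot)\,\varphi(\cdot)$ can be controlled in the regularity class required to test against $g \in \mathcal{C}^\beta_{\mathrm{loc}}$. This is precisely the refinement of the $\lambda^\beta$ bound appearing in Theorem \ref{Thm: microlocal Young II}, which must be carried out carefully using the higher-order Taylor characterization of $\mathcal{C}^\alpha_{\mathrm{loc}}$ together with the pointwise control \eqref{eq:remainder} of the remainder.
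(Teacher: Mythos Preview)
Your proposal is correct and follows essentially the same approach as the paper: you derive the reconstruction bound $|(f\cdot g - F_x)(\varphi^\lambda_x)| \lesssim \lambda^{\alpha+\beta}$ via the same rescaling argument on the Taylor remainder used in the proof of Theorem \ref{Thm: microlocal Young II}, and the only difference is that you spell out the $(\beta,\alpha+\beta)$-coherence computation explicitly whereas the paper simply invokes \cite[Prop.~14.4]{CZ20}. Your identification of the delicate step---controlling the rescaled remainder in the regularity needed to test against $g$---is exactly the point the paper handles (somewhat heuristically) in the asymptotic expansion of $h_k$ as $\lambda\to 0^+$.
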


\begin{proof}
	Following \cite[Prop 14.4]{CZ20} we can infer that $F$ is a $(\beta,\alpha+\beta)$-coherent germ. Since the hypotheses of Theorem \ref{Thm: microlocal Young II} are met, then the product $f\cdot g$ exists and it is uniquely defined as $\Delta^*(f\otimes g)$ where $\Delta$ is the diagonal map. Finally, to prove that $\mathcal{R}(F)=f\cdot g$, we follow slavishly the proof of Theorem \ref{Thm: microlocal Young II} and we observe that, in view of Equation \eqref{Eq: reconstruction bound},  for any compact set $K\subseteq\mathbb{R}^d$, in the limit $\lambda\to0^+$,
	\begin{align*}
		\lvert(f\cdot g-P_x\cdot g&)(\varphi^\lambda_x)\rvert=\lvert g((f-P_x)\varphi^\lambda_x) \rvert=\lvert g(R(x,\cdot)\varphi^\lambda_x)\rvert \nonumber \\
		&\leq \sum_{\lvert k \rvert = \lfloor \alpha \rfloor} \lvert g(h_k(\cdot)(\cdot-x)^k \varphi^\lambda_x)\rvert \overset{\tilde{\varphi}(y)=y^k \varphi(y)}{=} \sum_{\lvert k \rvert = \lfloor \alpha \rfloor} \lambda^{\lvert k \rvert} \lvert g(h_k(\cdot) \tilde{\varphi}^\lambda_x)\rvert = \lambda^{\lfloor \alpha \rfloor}\sum_{\lvert k \rvert = \lfloor \alpha \rfloor}  \lvert g(h_k(\cdot) \tilde{\varphi}^\lambda_x)\rvert \nonumber \\
&=\lambda^{\lfloor \alpha \rfloor}\sum_{\lvert k \rvert = \lfloor \alpha \rfloor} \sum_{\lvert m \rvert = \alpha - \lfloor \alpha \rfloor}  \lvert g(C_f (\cdot- x)^m \tilde{\varphi}^\lambda_x)\rvert \overset{\eta(y)=y^m\tilde{\varphi}}{\lesssim} \lambda^{\alpha} \lvert g(\eta^\lambda_x) \rvert \lesssim \lambda^{\alpha+\beta}\,,
	\end{align*}
	uniformly for $x\in K$, $\varphi\in\mathcal{D}(B(0,1))$. Here we exploited the properties of $\mathcal{C}^\alpha_{\mathrm{loc}}(\mathbb{R}^d)$, $\alpha>0$ and of $\mathcal{C}^\beta_{\mathrm{loc}}(\mathbb{R}^d)$, $\beta<0$.
\end{proof}

\begin{remark}
Observe that, if $\alpha+\beta>0$ in Proposition \ref{Prop: application_to_germs}, we have reproduced the results of \cite[Thm. 14.1]{CZ20}. At the same time, if $\alpha+\beta\leq 0$, the reconstruction of the germ of distributions of Equation \eqref{Eq: germ of the product} as per \cite{CZ20} is not unique and this reverberates in a lack of a canonical choice of the product between $f$ and $g$. 

From this point of view, Theorem \ref{Thm: microlocal Young II} provides a rationale to identify, among all the possible reconstructions, a \textit{canonical} reconstruction of the germ of distributions of Equation \eqref{Eq: germ of the product} represented by the product $f\cdot g$ in cases where the classical Young product theorem fails to do so. 
\end{remark}

\begin{remark}
Observe that, in the case $\alpha+\beta=0$ the reconstruction theorem as per \cite{CZ20, Hai14} entails that 
\begin{align*}
\lvert(f\cdot g-P_x\cdot g)(\varphi^\lambda_x)\rvert\lesssim (1+|\log\lambda|)\,,
\end{align*}
whereas Proposition \ref{Prop: application_to_germs} yields
\begin{align*}
\lvert(f\cdot g-P_x\cdot g)(\varphi^\lambda_x)\rvert\lesssim1\,.
\end{align*}
The two statements are not in contradiction one with each other and, actually, that of Proposition \ref{Prop: application_to_germs} improves on the standard result from the reconstruction theorem, though its application is limited to the product of those distributions abiding to Theorem \ref{Thm: microlocal Young II}.
\end{remark}

\appendix
\section{Scaling Degree}\label{App: sd}
In this appendix we shall briefly recall the definition and some results related to the theory of the scaling degree of distributions, without any attempt of completeness. 
For further details, we refer to \cite{BF00} and \cite[App. B]{DDRZ20}.

\begin{definition}\label{Def: scaling degree}
	Let $U\subseteq \mathbb{R}^d$ be a conic open subset of $\mathbb{R}^d$ and for any $x_0\in \mathbb{R}^d$, let $U_{x_0}\vcentcolon=U+x_0$.
	Given $f\in\mathcal{D}(U)$ and $\lambda\in(0,1)$, we denote $f^\lambda_{x_0}\vcentcolon=\lambda^{-d}f(\lambda^{-1}(x-x_0))\in\mathcal{D}(U_{x_0})$.
	By duality, given $t\in\mathcal{D}'(U_{x_0})$ we define $t^\lambda_{x_0}\in\mathcal{D}'(U)$ as $t^\lambda_{x_0}(f)\vcentcolon=t(f^\lambda_{x_0})$ for any $f\in\mathcal{D}(U)$. 
	The scaling degree of $t$ at $x_0$ is 
	\begin{align}\label{Eq: scaling degree at x0}
	\operatorname{sd}_{x_0}(t)
	\vcentcolon=\inf\bigg\lbrace\omega\in\mathbb{R}\,|\,\lim_{\lambda\to 0^+}\lambda^\omega t^\lambda_{x_0}=0\bigg\rbrace\,.
	\end{align}
\end{definition}

\begin{example}\label{Ex: sd of delta on a point}
	As an example, we consider the Dirac delta distribution centered at a point $x\in\mathbb{R}^d$, $\delta_{x}\in\mathcal{D}'(\mathbb{R}^d)$.
	A direct computation shows that $\delta_x^\lambda=\lambda^{-d}\delta_x$ so that $\operatorname{sd}_x(\delta_x)=d$.
\end{example}

\begin{remark}\label{Rmk: sd product estimate}
We observe that the scaling degree is additive with respect to tensor product, \textit{i.e.}, \cite[Lem. 5.1]{BF00}
	\begin{align}\label{Eq: sd of tensor product}
		\operatorname{sd}_{(x_1,x_2)}(T_1\otimes T_2)
		=\operatorname{sd}_{x_1}(T_1)
		+\operatorname{sd}_{x_2}(T_2)\,.
	\end{align}
	Moreover \cite[Lem. 6.6]{BF00} entails (in particular) that if $T_1,T_2\in\mathcal{D}'(\mathbb{R}^d)$ are such that their product exists, see \cite[Thm. 8.2.10]{Hor03}, then $T_1T_2\in\mathcal{D}'(\mathbb{R}^d)$ satisfies
	\begin{align}\label{Eq: sd of product}
		\operatorname{sd}_x(T_1T_2)
		\leq\operatorname{sd}_x(T_1)
		+\operatorname{sd}_x(T_2)\,.
	\end{align}
\end{remark}

The following theorem \cite[Thm. 5.2-5.3]{BF00} connects the notion of scaling degree at a point with an extension procedure for distributions, of which we recall the notion.

\begin{definition}
Let $x\in\mathbb{R}^d$ and $t\in\mathcal{D}'(\mathbb{R}^d\setminus\{x\})$. We say that $\widetilde{t}\in\mathcal{D}^\prime(\mathbb{R}^d)$ is an extension of $t$ and we denote it with $t\subseteq\widetilde{t}$ if for any $\psi\in\mathcal{D}(\mathbb{R}^d\setminus\{x\})$ it holds $\widetilde{t}(\psi)=t(\psi)$.
\end{definition}

\begin{theorem}\label{Thm: extension with scaling degree}
	Let $x\in\mathbb{R}^d$, $t\in\mathcal{D}'(\mathbb{R}^d\setminus\{x\})$ and set $\rho:=\operatorname{sd}_x(t)-d$.
	Then
	\begin{enumerate}
		\item
		if $\rho<0$, there exists a unique $\widetilde{t}\in\mathcal{D}'(\mathbb{R}^d)$ such that $t\subseteq\widetilde{t}$ and $\operatorname{sd}_x(\widetilde{t})=\operatorname{sd}_x(t)$.
		\item
		if $\rho\geq 0$, any distribution $\widetilde{t}\in\mathcal{D}'(\mathbb{R}^d)$ such that $t\subseteq\widetilde{t}$ and $\operatorname{sd}_x(\widetilde{t})=\operatorname{sd}_x(t)$ is of the form
		\begin{align*}
			\widetilde{t}
			=t\circ W_\rho
			+\sum_{|\alpha|\leq\rho}a_\alpha\partial^\alpha\delta_x\,,
		\end{align*}
		where $\{a_\alpha\}_\alpha\subset\mathbb{C}$ while $W_\rho\colon\mathcal{D}(\mathbb{R}^d)\to\mathcal{D}(\mathbb{R}^d)$ is defined by
		\begin{align*}
			W_\rho f
			:=f
			-\sum_{|\alpha|\leq \rho}\frac{1}{\alpha!}(\partial^\alpha f)(x)\psi_\alpha\,,
		\end{align*}
		with $\{\psi_\alpha\}_\alpha\subseteq\mathcal{D}(\mathbb{R}^d)$ any family test-functions satisfying $\partial^\beta\psi_\alpha(x)=\delta^\beta_\alpha$.
		\item
		if $\rho=+\infty$, no extensions $\widetilde{t}\in\mathcal{D}'(\mathbb{R}^d)$ of $t$ exist.
	\end{enumerate}
\end{theorem}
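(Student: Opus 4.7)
The plan is to translate WLOG to $x=0$ and to analyze, in all three regimes, the behavior of $t(\chi_\lambda\psi)$ as $\lambda\to 0^+$, where $\chi\in\mathcal{E}(\mathbb{R}^d)$ vanishes in a neighborhood of the origin and equals $1$ outside a larger one, $\chi_\lambda(y):=\chi(y/\lambda)$, and $\psi\in\mathcal{D}(\mathbb{R}^d)$. This makes $\chi_\lambda\psi\in\mathcal{D}(\mathbb{R}^d\setminus\{0\})$, so that $t(\chi_\lambda\psi)$ is unambiguously defined, and its $\lambda\to 0^+$ asymptotics are controlled by $\operatorname{sd}_0(t)$ via the substitution that produces $t^\lambda_0$.

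For item $1$ ($\rho<0$) I would set $\widetilde{t}(\psi):=\lim_{\lambda\to 0^+}t(\chi_\lambda\psi)$. Writing, for $a\in(0,1)$, $\chi_\lambda-\chi_{a\lambda}=\eta_a(\cdot/\lambda)$ with $\eta_a:=\chi-\chi(\cdot/a)\in\mathcal{D}(\mathbb{R}^d\setminus\{0\})$, a change of variable yields
\begin{align*}
t\bigl((\chi_\lambda-\chi_{a\lambda})\psi\bigr)=\lambda^d\,t^\lambda_0\bigl(\eta_a\cdot\psi(\lambda\,\cdot)\bigr)=\lambda^{d-\omega}\bigl(\lambda^\omega t^\lambda_0\bigr)\bigl(\eta_a\cdot\psi(\lambda\,\cdot)\bigr)\,.
\end{align*}
Choosing $\omega\in(\operatorname{sd}_0(t),d)$, which is nonempty exactly because $\rho<0$, the bracketed family is bounded as $\lambda\to 0^+$ by the defining property of the scaling degree (in its strong form, uniform on bounded subsets of $\mathcal{D}$), while $\lambda^{d-\omega}\to 0$. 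A dyadic Cauchy argument then produces the limit and its continuity in $\psi$, yielding an extension that manifestly preserves the scaling degree. Uniqueness follows because any other extension would differ from $\widetilde{t}$ by a distribution supported at $\{0\}$, namely a finite sum $\sum c_\alpha\partial^\alpha\delta_0$ of scaling degree $d+|\alpha|\geq d>\operatorname{sd}_0(t)$; scaling-degree preservation thus forces every $c_\alpha$ to vanish.

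For item $2$ ($\rho\geq 0$), the operator $W_\rho$ subtracts the Taylor polynomial of $\psi$ at $0$ up to order $\lfloor\rho\rfloor$, so that $W_\rho\psi(y)=O(|y|^{\lfloor\rho\rfloor+1})$ near $0$. This extra decay translates, in the computation above applied to $W_\rho\psi$ in place of $\psi$, into an additional factor $\lambda^{\lfloor\rho\rfloor+1}$ on the support of $\eta_a$, which is exactly what makes the range $\omega\in(d+\rho,\,d+\lfloor\rho\rfloor+1)$ available. The resulting $\widetilde{t}_0(\psi):=\lim_{\lambda\to 0^+}t(\chi_\lambda W_\rho\psi)$ is one extension preserving the scaling degree; independence from the auxiliary family $\{\psi_\alpha\}$ is automatic modulo the span of $\{\partial^\alpha\delta_0\}_{|\alpha|\leq\rho}$, and any other scaling-degree-preserving extension $\widetilde{t}$ differs from $\widetilde{t}_0$ by a distribution supported at $\{0\}$ whose scaling degree cannot exceed $d+\rho$, forcing it into precisely $\sum_{|\alpha|\leq\rho}a_\alpha\partial^\alpha\delta_0$.

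For item $3$ ($\rho=+\infty$) I would argue by contradiction: any putative $\widetilde{t}\in\mathcal{D}'(\mathbb{R}^d)$ has finite order on each compact neighborhood of $0$ by the local structure theorem for distributions, hence $\operatorname{sd}_0(\widetilde{t})<+\infty$; since the scaling degree is monotone under restriction to a smaller class of test functions, $\operatorname{sd}_0(t)\leq\operatorname{sd}_0(\widetilde{t})<+\infty$, contradicting $\rho=+\infty$. The main obstacle across all three items is to upgrade the pointwise scaling bound implicit in Definition~\ref{Def: scaling degree} into uniform control on bounded subsets of $\mathcal{D}$; this is needed both to produce an actual distribution from the limit in item $1$ and to propagate the estimate through the Taylor subtraction $W_\rho$ in item $2$, since $W_\rho\psi$ is no longer supported away from $0$ and its quantitative vanishing at the origin must replace the compactness of the support that was available in item $1$.
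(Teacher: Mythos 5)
The paper itself contains no proof of this theorem: it is quoted directly from Brunetti--Fredenhagen \cite[Thms.~5.2--5.3]{BF00}, so the only meaningful comparison is with the argument of that reference, which yours correctly reconstructs. Your route is essentially the same as theirs -- the cutoff limit $\lim_{\lambda\to 0^+}t(\chi_\lambda\psi)$ with a dyadic telescoping estimate exploiting the window $\omega\in(\operatorname{sd}_x(t),d)$ for $\rho<0$, the Taylor-type subtraction $W_\rho$ opening the window $\omega\in(d+\rho,\,d+\lfloor\rho\rfloor+1)$ for $\rho\geq 0$, the structure theorem for distributions supported at a point to get uniqueness and the classification $\sum_{|\alpha|\leq\rho}a_\alpha\partial^\alpha\delta_x$, and the finite-order argument ruling out $\rho=+\infty$ -- and you correctly identify the Banach--Steinhaus upgrade of the pointwise scaling limit to uniform convergence on bounded subsets of $\mathcal{D}$ as the technical crux that makes the construction work.
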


Finally, we state a result connecting the notion of scaling degree and the space of H\"older functions of negative regularity.

\begin{proposition}\label{Prop: link between sd and Holder}
Let $u\in\mathcal{D}^\prime(\mathbb{R}^d)$ such that there exists $\beta\vcentcolon=\sup_{x\in\mathbb{R}^d}\mathrm{sd}_x(u)>0$. 
Then $u\in\mathcal{C}^{-\beta}(\mathbb{R}^d)$.
\end{proposition}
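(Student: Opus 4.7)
The plan is to unpack both definitions and use Banach-Steinhaus to translate the pointwise-in-$(x,\varphi)$ control furnished by the scaling degree into the uniform estimate of Definition \ref{Def: C-alpha-loc-negativi}. Fix a compact $K \subset \mathbb{R}^d$ and set $r := \lceil \beta \rceil$; the target is a constant $C_K$ with $|u(\varphi^\lambda_x)| \leq C_K \lambda^{-\beta}$ uniformly in $x \in K$, $\lambda \in (0,1]$ and $\varphi \in \mathcal{B}_r$, where $u^\lambda_x(\varphi) = u(\varphi^\lambda_x)$.

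First I would fix an auxiliary exponent $\omega > \beta$. The hypothesis $\operatorname{sd}_x(u) \leq \beta < \omega$ together with Definition \ref{Def: scaling degree} yields $\lambda^\omega u^\lambda_x \to 0$ in $\mathcal{D}'(\mathbb{R}^d)$ as $\lambda \to 0^+$, so the family $\{\lambda^\omega u^\lambda_x\}_{\lambda \in (0,1]}$ is pointwise bounded on $\mathcal{D}(\mathbb{R}^d)$. Since $\mathcal{B}_r$ is a bounded subset of the barrelled space $\mathcal{D}(\mathbb{R}^d)$, Banach-Steinhaus upgrades this to $|u(\varphi^\lambda_x)| \leq C_x \lambda^{-\omega}$ uniformly in $\varphi \in \mathcal{B}_r$ and $\lambda \in (0,1]$. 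To upgrade $C_x$ to a constant uniform in $x \in K$, I would note that the set $\{\varphi^\lambda_x : x \in K,\ \lambda \in (0,1],\ \varphi \in \mathcal{B}_r\}$ is supported in the fixed compact set $\overline{K}_1$ and its $C^k$-seminorms grow at most like $\lambda^{-d-k}$; hence after multiplication by a suitable power of $\lambda$ this family is bounded in the Fr\'echet space $\mathcal{D}(\overline{K}_1)$. A second application of Banach-Steinhaus, now to the single distribution $u$ acting on $\mathcal{D}(\overline{K}_1)$, produces the $K$-uniform bound $|u(\varphi^\lambda_x)| \leq C_{K,\omega}\, \lambda^{-\omega}$ for all $x \in K$, $\varphi \in \mathcal{B}_r$, $\lambda \in (0,1]$.

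The main obstacle is the passage from $\lambda^{-\omega}$, valid for arbitrary $\omega > \beta$, to the critical exponent $\lambda^{-\beta}$: for $\lambda \in (0,1]$ one has $\lambda^{-\omega} > \lambda^{-\beta}$, and the discrepancy $\lambda^{\beta-\omega}$ is unbounded as $\lambda \to 0^+$, so the limit $\omega \to \beta^+$ cannot be taken naively in the constant. I would bridge this gap by invoking the equivalence between the ``weak'' scaling degree of Definition \ref{Def: scaling degree} and its ``strong'' counterpart: if $\lambda^\omega u^\lambda_x \to 0$ in $\mathcal{D}'$ for every $\omega > \beta$, then in fact $\lambda^\beta u^\lambda_x$ is already bounded in $\mathcal{D}'(\mathbb{R}^d)$ as $\lambda \to 0^+$. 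Running the two Banach-Steinhaus arguments above with $\omega$ replaced by $\beta$ then produces the sharp estimate $|u(\varphi^\lambda_x)| \lesssim \lambda^{-\beta}$ uniformly in $x \in K$, $\varphi \in \mathcal{B}_r$ and $\lambda \in (0,1]$, which by Definition \ref{Def: C-alpha-loc-negativi} is precisely the assertion $u \in \mathcal{C}^{-\beta}(\mathbb{R}^d)$.
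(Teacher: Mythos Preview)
Your approach is considerably more elaborate than the paper's. The paper's proof is two lines: it asserts directly that $|u(\varphi^\lambda_x)|\lesssim\lambda^{-\mathrm{sd}_x(u)}\lesssim\lambda^{-\beta}$ ``by definition of scaling degree'', without discussing uniformity in $x$ or in $\varphi$. You, by contrast, take the infimum in Definition~\ref{Def: scaling degree} seriously, first obtain a bound $\lambda^{-\omega}$ for each $\omega>\beta$, and then use two Banach--Steinhaus arguments to manufacture uniformity over $\varphi\in\mathcal{B}_r$ and over $x\in K$. That part of your strategy is sound and is a genuine addition to what the paper writes down.

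The gap is the final step. Your claimed equivalence between the ``weak'' and ``strong'' scaling degree --- that $\lambda^\omega u^\lambda_x\to 0$ for every $\omega>\beta$ forces $\lambda^\beta u^\lambda_x$ to be bounded --- is not a theorem, and in fact fails. On $\mathbb{R}$, take $u$ to be the distribution generated by $|x|^{-1/2}\,|\!\log|x||$ near the origin (smoothly cut off). A change of variables gives
\[
u(\varphi^\lambda_0)=\lambda^{-1/2}\log(1/\lambda)\int\varphi(z)|z|^{-1/2}\,dz+O(\lambda^{-1/2}),
\]
so $\lambda^\omega u(\varphi^\lambda_0)\to 0$ precisely for $\omega>1/2$, hence $\mathrm{sd}_0(u)=1/2$ and $\beta=\sup_x\mathrm{sd}_x(u)=1/2$; yet $\lambda^{1/2}u(\varphi^\lambda_0)\sim\log(1/\lambda)$ is unbounded. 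Thus the passage from the family of bounds $\lambda^{-\omega}$, $\omega>\beta$, to the critical bound $\lambda^{-\beta}$ cannot be made in the way you propose. What your argument actually delivers, cleanly, is $u\in\mathcal{C}^{-\beta-\varepsilon}_{\mathrm{loc}}(\mathbb{R}^d)$ for every $\varepsilon>0$; the paper's one-line proof tacitly reads the strong bound into the definition and therefore does not confront the obstacle you (correctly) isolated.
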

\begin{proof}
Let $\varphi\in\mathcal{D}(\mathbb{R}^d)$, $x\in\mathbb{R}^d$ and $\lambda\in(0,1]$. 
By definition of scaling degree,
\begin{align*}
\lvert u(\varphi^\lambda_x) \rvert \lesssim \lambda^{-sd_x(u)} \lesssim \lambda^{-\beta}\,,
\end{align*}
\textit{i.e.} $u\in\mathcal{C}^{-\beta}(\mathbb{R}^d)$.
\end{proof}
\vspace{2cm}

On behalf of all authors, the corresponding author states that there is no conflict of interest.


\begin{thebibliography}{}
		\bibitem[BCD11]{BCD11}
		H. Bahouri, J. Chemin, R. Danchin, 
		\textit{``Fourier analysis and nonlinear partial differential equations''},
		Springer Berlin (2011). 523p.


		\bibitem[BF00]{BF00}
		R. Brunetti, K. Fredenhagen,
		\textit{``Microlocal analysis and interacting quantum field theories: renormalization on physical backgrounds''},
		Comm. Math. Phys. \textbf{208} (2000), 623, 
		[arXiv:math-ph/9903028].
		
		\bibitem[Bo81]{Bony}
		 J.-M. Bony 
		 \textit{``Calcul symbolique et propagation des singularit\'es pour les \'equations aux d\'eriv\'ees partielles non lin\'eaires"}, Ann. Sci. \'Ecole Norm. Sup. {\bf 14} (1981), 209.

		\bibitem[BL21]{BL21}
		L. Broux, D. Lee, 
		\textit{``Besov reconstruction''}, arXiv:2106.12528 [math.AP].		
		
		\bibitem[CZ20]{CZ20}
		F. Caravenna, L. Zambotti, 
		\textit{``Hairer's reconstruction theorem without regularity structures''},
		EMS Surv. Math. Sci., 7(2020), 207-251, doi: 10.4171/EMSS/39 

		\bibitem[DDRZ20]{DDRZ20}
		C.~Dappiaggi, N.~Drago, P.~Rinaldi and L.~Zambotti,
		\textit{``A microlocal approach to renormalization in stochastic PDEs''},
		Communications in Contemporary Mathematics, DOI: 10.1142/S0219199721500759
		
		\bibitem[DRS22]{DRS22}
C.~Dappiaggi, P.~Rinaldi and F.~Sclavi,
\textit{``Besov Wavefront Set''}
[arXiv:2206.06081 [math-ph]],
		
		\bibitem[DH72]{DH72}
		J.J.~ Duistermaat, L.~ H\"ormander, 
		\textit{``Fourier integral operators II''},
		Acta Math., {\bf 128} (1972), 183.
		
		\bibitem[FJ99]{FJ99}
		F.G.~ Friedlander, M.~Joshi,
		\textit{``Introduction to the theory of distributions''}, 2nd ed.
		Cambridge University Press (1999), 186p.
		
		\bibitem[Hai14]{Hai14}
		M.~Hairer,
		\textit{``A theory of regularity structures''},
		Inv. Math. {\bf 198} (2014), 269,
		[arXiv:1303.5113 [math.AP]].
		
		\bibitem[HL17]{HL17}
		M.~Hairer, C.~Labb\'e
		\textit{``The reconstruction theorem in Besov spaces''},
		J. Funct. Anal. {\bf 273} no. 8 (2017), 2578-2618
		
		\bibitem[H\"or85]{Hor85}
		L. H\"ormander, 
		\textit{``The Analysis of Linear Partial Differential Operators III: Pseudo-Differential Operators''},
		Springer Berlin	(1985), 504p.

		\bibitem[H\"or97]{Hor97}
		L. H\"ormander, 
		\textit{``Lecture Notes on Nonlinear Hyperbolic Differential Equations''},
		Springer-Verlag, Berlin, 1997, 304p.

		\bibitem[H\"or03]{Hor03}
		L. H\"ormander, 
		\textit{``The Analysis of Linear Partial Differential Operators I: Distribution Theory and Fourier Analysis''},
		Springer Berlin	(2003), 440p.
		
		\bibitem[JS02]{JS02}
		W. Junker, E. Schrohe 
		\textit{``Adiabatic Vacuum States on General Spacetime Manifolds: Definition, Construction, and Physical Properties''},
		Annales Henri Poincar\'e {\bf 3} (2002) 1113, [arXiv:math-ph/0109010].
		
		\bibitem[Obe86]{Ober86}
		M. Oberguggenberger, \textit{``Products of distributions''}, J. Reine Angew. Math. {\bf 365}, (1986), 1.
		
		\bibitem[Obe92]{Ober}
		M. Oberguggenberger,
		\textit{``Multiplication of Distributions and Applications to Partial Differential Equations''},
		Longman Higher Education (1992), 336p.
		
		\bibitem[KNVW08]{KNVW08}
		N. Kalton, J. van Neerven, M. Mark, L. Weis, 
		\textit{``Embedding vector$-$valued Besov spaces into spaces of $\gamma-$radonifying operators''},
		Math. Nachr. {\bf 281} (2008), 238,
		[arXiv:math/0610620 [math.FA]]
		
		\bibitem[RS21]{RS21}
		P.~ Rinaldi, F.~ Sclavi, 
		\textit{``Reconstruction Theorem for Germs of Distributions on Smooth Manifolds''},
		J. Math. Anal. Appl. {\bf 501} (2021), 125215
		[arXiv:2012.01261 [math-ph]]
		

		\bibitem[Tri06]{Tri06}
		H. Triebel, 
		\textit{``Theory of Function Spaces III''}, vol. 100 of Monographs in Mathematics,
		Birkh\"aus Verlag, Basel, 2006.
		
		\bibitem[You36]{Young}
		L.~C.~Young, \textit{``An inequality of the H\"older type, connected with Stieltjes
		integration''}, Acta Math., {\bf 67} (1936), 251

\end{thebibliography}
\end{document}